\newtheorem{theorem}{Theorem}
\begin{document}
\allowdisplaybreaks

\newcommand{\target}{\mathsf{t}}
\newcommand{\stat}{\mathsf{s}}
\newcommand{\Tx}{\mathsf{T}}
\newcommand{\Rx}{\mathsf{R}}
\newcommand{\Ind}{\mathds{1}}
\newcommand{\Bt}{\mathsf{B_t}}
\newcommand{\Rel}{\mathsf{Re}}
\newcommand{\Comp}{\mathrm{c}}
\newcommand{\Erf}{\mathrm{erf}}
\newcommand{\Erfc}{\mathrm{erfc}}
\newcommand{\Mat}{\mathsf{M}}
\newcommand{\Pcp}{\mathsf{PCP}}
\newcommand{\PPP}{\mathsf{PPP}}
\newcommand{\Tho}{\mathsf{T}}
\newcommand{\Pa}{\mathsf{p}}
\newcommand{\Event}{\mathsf{E}}
\newcommand{\Sij}{\mathsf{S}}
\newcommand{\s}{\mathsf{s}}
\newcommand{\Dp}{\mathsf{P}}
\newcommand{\Da}{\mathsf{d}}
\newcommand{\ie}{\textit{i.e.}, }
\newcommand{\x}{\mathbf{x}}
\newcommand{\y}{\mathbf{y}}
\newcommand{\z}{\mathbf{z}}
\newcommand{\origin}{\mathbf{0}}
\newcommand{\ob}{\rm o}
\newcommand{\bp}{\mathbf{b}}
\newcommand{\bps}{\mathbf{c}}
\newcommand{\mbar}{\bar{m}}
\newcommand{\tbar}{\bar{t}}
\newcommand{\ball}[1]{\mathcal{B}\left(#1\right)}
\newcommand{\Prob}{\mathbb{P}}
\newcommand{\Expect}{\mathbb{E}}
\newcommand{\Probb}{\mathbb{P}}
\newcommand{\Tdp}{\mathsf{P}}
\newcommand{\rthree}{\mathbb{R}^3}
\newcommand{\diff}{\mathrm{d}}
\newcommand{\erfc}{\mathrm{erfc}}
\newtheorem{corollary}{Corollary}
\newtheorem{lemma}{Lemma}
\title{Target Detection in Clustered Mobile Nanomachine Networks}

\author{Nithin V. Sabu, Kaushlendra Pandey, Abhishek K. Gupta, Sameer S.M.
\thanks{ N. V. Sabu and Sameer S.M. are with the National Institute of Technology Calicut, Kozhikode, Kerala 673601, India (Email:{ \{nithinvs,sameer\}@nitc.ac.in}).}
\thanks{K. Pandey is with Central Institute of Technology Kokrajhar, Assam 783370 India (Email:kk.pandey@cit.ac.in).}
\thanks{ A. K. Gupta is with Indian Institute of Technology Kanpur, Kanpur UP 208016, India (Email:gkrabhi@iitk.ac.in).}}

\markboth{}%
{Shell \MakeLowercase{\textit{et al.}}: Bare Demo of IEEEtran.cls for IEEE Journals}

\maketitle

\begin{abstract}
This work focuses on the development of an analytical framework to study a diffusion-assisted molecular communication-based network of nano-machines (NMs) with a clustered initial deployment to detect a target in a three-dimensional (3D) medium. Leveraging the Poisson cluster
process to model the initial locations of clustered NMs, we derive the analytical expression for the target detection probability with respect to time along with relevant bounds. We also investigate a single-cluster scenario. All the derived expressions are validated through extensive particle-based simulations. Furthermore, we analyze the impact of key parameters, such as the mean number of NMs per cluster, the density of the cluster, and the spatial spread, on the detection performance. Our results show that detection probability is greatly influenced by clustering, and different spatial arrangements produce varying performances. The results offer a better understanding of how molecular communication systems should be designed for optimal target detection in nanoscale and biological environments.
\end{abstract}

\begin{IEEEkeywords}
Clustered nanomachines, target detection, Mat\' ern cluster process, Thomas cluster process.
\end{IEEEkeywords}

\IEEEpeerreviewmaketitle
\section{Introduction}\label{Sec:intro}

Molecular communication (MC) is a promising approach to communication that draws inspiration from biological systems, enabling information transfer via molecules \cite{nakano2013c, farsad2016}. In addition to various applications of MC including targeted drug delivery \cite{chude-okonkwo2017}, environmental monitoring \cite{nakano2012b}, and biosensing, target detection is an important application that deals with the identification of targets of interest, such as harmful pathogens or biomarkers associated with diseases. Target detection techniques using nanomachines (NMs) can be particularly useful in medical scenarios such as early detection of cancerous cells or infectious agents. Due to the limited functionality of a single NM, deploying multiple NMs in the environment is essential to ensure adequate detection coverage. This necessity highlights the importance of mathematical modeling and system-level analysis for large-scale target detection systems comprising multiple NMs.

The past literature has attempted to model and analyze target detection in MC systems under various MC settings. For example, in \cite{nakano2017a}, the authors studied a leader-follower model for mobile bio-nanomachines designed to detect targets. A target detection system where stationary NMs detect continuously emitting targets in the surrounding medium was studied in \cite{mosayebi2018}.  Target detection and tracking in a two-dimensional context using stationary and mobile NMs was investigated in \cite{okaie2016a}. 

A large-scale system of fully absorbing NMs can be modeled as spheres with centers distributed spatially according to a Poisson point process (PPP) \cite{andrews2023a, haenggi2012}. Homogeneous PPPs have been widely used in molecular communication research to capture the inherent randomness of NM spatial distributions \cite{sabu2019, deng2017a, dissanayake2019a}. In \cite{sabu2020b}, the detection probability for a mobile target molecule interacting with multiple stationary NMs distributed as PPP was derived.  Based on \cite{sabu2020b}, the work in \cite{sabu2024d} provides an analytical framework for modeling and analyzing the performance of target detection systems that involve multiple mobile NMs of varying sizes with passive or absorbing boundaries. The framework evaluates detection performance for various scenarios, including degradable and non-degradable targets as well as mobile and stationary targets. These works considered a uniform spatial distribution of NMs.

Clustering is a common phenomenon in biological systems, where spatially localized interactions are essential to maintain functionality. Clustering can occur naturally due to spatial limitations, resource availability, or specific deployment strategies. NMs may initially form clusters in biological settings depending on controlled release mechanisms, external forces, or natural aggregation.
 In targeted drug delivery, for instance, nanoparticles containing therapeutic agents are sometimes delivered close to sick tissues, where they form clusters before diffusing toward their target.  In biosensing applications, artificial nanosensors or engineered bacteria could be used in dense colonies at designated locations.  These clusters then disperse via Brownian motion to detect biological or chemical markers.

The Poisson cluster process (PCP) is widely used to model clustering in wireless networks due to its mathematical tractability, mainly attributed to its known probability generating functional (PGFL) \cite{haenggi2012}. 
 In a PCP-based model, the parent points representing the cluster centers are distributed according to PPP, while each cluster contains nodes distributed as a daughter point process (PP) around the parent points. Depending on the choice of the daughter PP, PCP can be classified into Mat\'ern cluster process (MCP) or Thomas cluster process (TCP), making it a flexible model to capture the spatial characteristics of clustered NM networks \cite{haenggi2012}.
Leveraging the mathematical tractability of PCP models, several studies have derived key spatial properties of PCP, such as the contact distance and nearest-neighbor distance distributions. Specifically, the MCP for these metrics has been analyzed in \cite{Azimi2018SingleCluster,Afshang2017Matern,Pandey2021Matern}, while similar results have been explored for the TCP in \cite{Afshang2017Thomas}.  Given its tractability and well-studied properties, PCP provides a suitable framework for modeling clustered networks of NMs.

Recently, stochastic geometry-based models were studied for clustered MC systems \cite{azimi-abarghouyi2023,azimi2024matern}. For example, \cite{azimi-abarghouyi2023} proposed a framework for modeling and analyzing clustered molecular nanonetworks. They derived distance distributions in a three-dimensional (3D) space, leading to simplified expressions for the TCP, and examined interference and error probabilities. Additionally, \cite{azimi2024matern} introduced a novel variant of the MCP, termed the MCP with holes (MCP-H).  The authors characterized conditional distance distributions and derived simple closed-form expressions, enabling further analysis of the PGFL. Another notable work is \cite{albeverio1998}, in which the survival probability of a Brownian particle was analyzed in a Poisson cluster process of correlated and stationary traps.

{\color{black}While prior work, such as \cite{sabu2024d}, has analyzed target detection using a PPP for uniform NM deployment, research on clustered target detection systems is lacking. Clustering is prevalent in practical applications like targeted drug delivery, where NMs may be released in localized groups, and extending PPP to clustered models like PCP is non-trivial due to the need to handle spatial correlations. Our work fills this gap by modeling and analyzing target detection in systems where spherical mobile NMs are spatially clustered using a PCP. We derive exact analytical expressions for detection probability in clustered settings, along with bounds and approximations for clustered NMs, and investigate single-cluster scenarios to quantify the impact of clustering parameters (e.g., cluster density $\lambda_\Pa$, mean NMs per cluster $\mbar$, spatial spread $r$ or $\sigma$) on detection performance. The single-cluster analysis is also distinct, as it models a non-stationary distribution where NMs are localized around a single cluster center, unlike the stationary PPP or PCP models. The key contributions of this work are summarized as follows:}

\begin{itemize}
\item We develop an analytical framework for evaluating the
target detection performance of a diffusion-assisted molecular communication-based network with NMs with clustered initial deployment.
    \item We derive analytical expressions for the target detection probability of NMs spatially distributed using PCP, specifically considering the MCP and TCP models.
    \item We also characterize the mean number of clusters that detect the target, which can provide us an alternative expression for target detection probability. Using it, we present the upper and lower bounds for the detection probability of PCP-deployed NMs.
    \item We derive analytical expressions for the detection probability at $t=0$ for MCP and TCP, capturing the impact of initial spatial distribution.
    \item We analyze a special case where NMs are deployed in a single cluster and derive detection probabilities for both uniform and Gaussian-distributed NMs.
    \item All analytical derivations are validated using particle-based simulations, demonstrating their accuracy under various parameter settings.
    \item We compare the detection probability of NMs deployed using PCP (MCP and TCP) against the PPP and a single-cluster deployment, providing insight into the trade-offs between clustering and uniform spatial distributions.
\end{itemize}
This work enhances the understanding of how clustering affects molecular communication by examining the effects of various system parameters, thus providing useful design insights for nanoscale and biological applications.

\textbf{Notation:} 
The Euclidean norm of a vector is denoted by  $\|.\|$. The spherical ball of radius $r$ centered at $\x$ is represented as $\ball{\x,r}$. The volume of a region $R$ is denoted by $|R|$. $\mathsf{A}\oplus\mathsf{B}$ represents the Minkowski sum of sets $\mathsf{A}$ and $\mathsf{B}$.

\begin{figure*}[ht]
    \centering
    \includegraphics[width=0.8\linewidth]{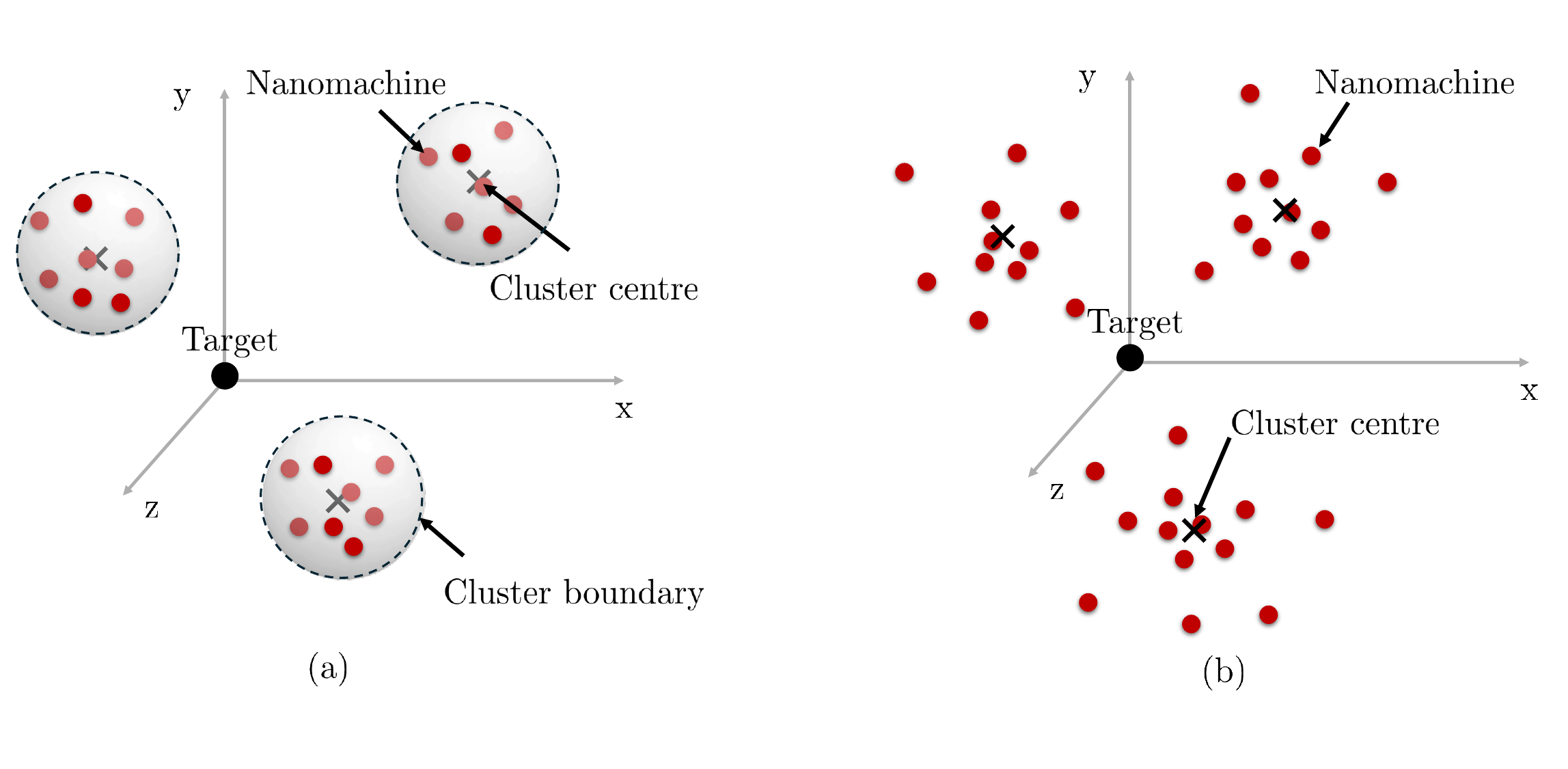}
    \caption{Schematic diagrams of (a) MCP and (b) TCP. Crosses ($\times$) represent cluster centers (parent points), and red dots (${\rm o}$) represent daughter points (NMs). }
    \label{MCP_TCP_Fig}
\end{figure*}
\section{System Model}\label{Sec:SysModel}
{\color{black}In this work, we consider a molecular communication-based target detection system where multiple spherical NMs of radius $a$ are deployed around static cluster centers in a 3D medium to detect a point target.} To model the clustered NMs, in this work, we can utilize the PCP, which we describe first in the following subsection.

\subsection{Poisson Cluster Process (PCP)}\label{SubSec:PCP}
PCP based models can capture clustered patterns in various applications, such as spatial patterns in biology \cite{azimi-abarghouyi2023}, clustered wireless communications \cite{saha2017enriched}, and environmental modeling. A PCP consists of points/nodes centered around cluster centers. In a PCP, the locations of cluster centers are distributed according to a PPP $\Phi_\Pa = \{\x_1, \x_2, \dots\}$ with density $\lambda_\Pa$. Each cluster consists of points/nodes that are identically distributed around the respective cluster center $\x_i \in \Phi_\Pa$. Let $\mathcal{N}_i$ denote the points in the $i$th cluster around cluster center $\x_i$. The point processes formed by points in each cluster can be seen as a daughter point process with the cluster center as its parent point. Further, let $\y_{ij}\in \mathcal{N}_i$ denote the location of the $j$th daughter point in the $i$th cluster. Hence, the PCP $\Phi_\Pcp$ can be expressed as,
\begin{align} 
	\Phi_{\Pcp} = \bigcup\nolimits_{i \in \mathbb{N}} \x_i + \mathcal{N}_i.
\end{align}
If the daughter points are also distributed as an independent PPP, then the cluster process is called \textit{doubly PCP} \cite{haenggi2012}. Here, the number of points in each cluster is Poisson distributed with mean $\mbar$. It is widely used in modeling of various phenomena in which events or objects occur in localized groups around central points.
{\color{black}Note that the cluster density $\lambda_\Pa$ denotes the mean number of cluster centers per unit volume, while the NM density $\lambda_{\diff}$ represents the mean number of NMs per unit volume. Both of the densities are related by $\lambda_{\diff} =\mbar \lambda_\Pa$. Thus, cluster density describes the spatial distribution of clusters, whereas NM density quantifies the overall node population in the network.}

  Two important variants of doubly PCP are MCP and TCP, which are frequently used in applications that require spatial modeling of clustered events. 

\subsubsection{MCP}
MCP is a doubly PCP with daughter points uniformly distributed within a spherical region of radius $r$ around their parent point. The MCP can be defined as,
\begin{align}
	\Phi_\Mat=\bigcup\nolimits_{i\in \mathbb{N}} \x_i+\mathcal{M}_i,
\end{align}
where  $\mathcal{M}_i$ denotes the $i$th cluster (relative to the cluster center location) with its points $\y_{ij}\in\mathcal{M}_i$  uniformly distributed in a spherical ball $\ball{\origin,r}$ with the probability density function (PDF) given by
\begin{align}
	f(\y)&=\frac{1}{(4/3) \pi r^3}\mathds{1}\left(\y\in \ball{\origin,r}\right),\label{Matdens}
\end{align}  independent of other points. Hence, each daughter PP $\mathcal{M}_i$ is a PPP with density  $\lambda(\y)=\mbar f(\y)$ and the density of the whole clustered PP is $\lambda_\Mat=\lambda_\Pa \mbar$.

\subsubsection{TCP}
TCP is another variant of doubly PCP where the distance of each daughter point from its parent is distributed as a normal distribution with zero mean and variance $\sigma^2$. TCP is suitable for modeling scenarios where the daughter points are more spread with $\sigma^2 $ determining the spread of a cluster. The TCP can be defined as,
\begin{align}
	\Phi_\Tho=\bigcup\nolimits_{i\in \mathbb{N}} \x_i+\mathcal{T}_i,
\end{align}where $\mathcal{T}_{i}$ denotes the $i$th cluster (relative to the cluster center location) with its points $\y_{ij}$, $j \in \mathbb{N}$ normally distributed with PDF,
\begin{align}
	f(\y) =\frac{1}{(2\pi\sigma^2)^{3/2}}\exp\left(-\frac{\|\y\|^2}{2\sigma^2}\right) \mathds{1}\left(\|\y\|\geq 0\right)\label{Thodens}.
\end{align}Here, $\|\y\|$ represents the Euclidean norm. Similar to MCP, each daughter PP $\mathcal{T}_i$ is a PPP with density  $\lambda(\y)=\mbar f(\y)$ and  the overall density of the TCP is $\lambda_\Tho=\lambda_\Pa \mbar$. Fig. \ref{MCP_TCP_Fig} (a) and (b) show a realization of MCP and TCP, respectively.

\subsection{Target Detection Framework}

In this work, we consider a target detection system where NMs are deployed in clustered fashion to detect a specific target. The NMs are mobile with their motion governed by Brownian dynamics. Below, we detail the key components of this system, including the nature of cluster centers, the target, and the detection mechanism.

{\color{black} We consider that the NMs are initially deployed as a PCP, forming clusters around designated cluster centers. Note that cluster centers described here do not represent physical devices. They act as virtual reference points for clusters and help define the spatial distribution of NMs. Note that for each cluster, its NMs are distributed around the corresponding cluster center. The primary goal of the system is to detect a target that can be thought of as a biological or chemical entity of interest, like a toxic molecule, pathogen, or disease biomarker. A single target is assumed for analytical tractability, with extensions to multiple targets deferred to future work. } 
 After deployment, the NMs start to propagate in the medium. To represent the region covered by NMs, we associate a spherical region of radius $a$ with each daughter point representing the center of the corresponding NM. Hence, the overall region covered by NMs at time $t = 0$ can be written as the union of all such spheres given as
\begin{align}\label{onlypsi}
\Psi &= \bigcup\nolimits_{\x_i \in \Phi_\Pa} \bigcup\nolimits_{\y_{ij} \in \mathcal{N}_i} \left( \x_i + \y_{ij} + \ball{\origin,a} \right)\nonumber\\
&= \bigcup\nolimits_{\x_i \in \Phi_\Pa} \bigcup\nolimits_{\y_{ij} \in \mathcal{N}_i}  \ball{\x_i + \y_{ij} ,a},
\end{align}
which we term as {\em Boolean PCP}.

After initial deployment $(t=0)$, the NMs propagate through the medium via Brownian motion with diffusion coefficient $D$. The event of interest is when any of the NMs detect the target by touching or intersecting it. Without loss of generality, let us take the target at the origin.
 Let $\bps_{ij}(t)=\{\z_{ij}(s),\ 0\leq s\leq t\}$ represent the relative Brownian path of a single NM at $\x_i+\y_{ij}$ over time $t$ starting from the origin $\origin$. Here, $\z_{ij}(s)$ is the location of the center of that NM at time $s$. Now, we can define $\bp_{ij}(t)$ as the path of the NM at $\x_i+\y_{ij}$ until time $t$, \ie $\bp_{ij}(t)=\x_i+\y_{ij}+\bps_{ij}(t)$. Similar to above,  the region covered by NMs within time $t$ can be given as
 \begin{align}
     \Psi_t= \bigcup\nolimits_{\x_i \in \Phi_\Pa} \bigcup\nolimits_{\y_{ij} \in \mathcal{N}_i} \bp_{ij}(t)\oplus \ball{\origin,a}.\label{eq:psit}
 \end{align}

The event $\Event$  of  hitting of a point target at origin $\origin$ by any of the NMs is given as
 \begin{align}
     \Event &=\{\Psi_t\cap \{\origin\} \neq \phi)\} \nonumber\\
     &=\bigcup\nolimits_{\x_i \in \Phi_\Pa} \bigcup\nolimits_{\y_{ij} \in \mathcal{N}_i} \Event_{ij},\label{event1}
 \end{align}
 where $\Event_{ij}=\{ \bp_{ij}(t)\oplus \ball{\origin, a} \cap \{\origin\} \neq \emptyset \}$. 
 Note that $\bp_{ij}(t) \oplus \ball{\origin, a}$ in \eqref{event1} represents the region covered up to time $t$ by the $j$-th NM of the $i$-th cluster. $\Event_{ij}$ denotes the event that this region intersects with the target point at the origin $\origin$, which means that the target is considered detected by the ($i,j)$th NM.
 
{\color{black} To compute the performance of the system, we define and utilize the metric detection probability. The detection probability is the probability that at least one NM, deployed in the medium, intersects with a point target within time $t$.}
 
 \section{Target Detection Via A Network of Clustered NMs}\label{Sec:TDCluster}
In this section, we study the target detection capability of a network of NMs, initially deployed as a PCP. We first derive the detection probability as given in the following theorem.

 \begin{theorem} \label{Thm_actual}
  The probability of detecting a point target at the origin by any of the NMs within time $t$, with the initial locations of NMs distributed as a PCP is (for proof, see Appendix \ref{Ap_actual})
\begin{align}
    \Tdp_{\Pcp}(\lambda_\Pa,t)&=
1-\exp\left(-\lambda_\Pa \int_{\rthree}\left(1-\exp\left(-\mbar\int_{\rthree}\frac{a}{||\x+\y||}\right.\right.\right. \left.\left.\erfc\left(\frac{||\x+\y||-a}{\sqrt{4Dt}}\right)f(\y)\diff \y\bigg)\right)\diff \x\right). \label{TDact}
\end{align} 
 \end{theorem}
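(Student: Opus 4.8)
The plan is to compute the complementary probability $1-\Tdp_{\Pcp}(\lambda_\Pa,t)=\Prob(\Event^{\Comp})$, the probability that \emph{no} NM intersects the point target within time $t$, and to exploit the hierarchical independence of the doubly PCP: conditioned on the parent process $\Phi_\Pa$, the clusters are mutually independent, and within each cluster the daughter positions together with the NMs' Brownian paths are i.i.d. Consequently the computation factorizes and reduces to two successive applications of the probability generating functional (PGFL) of a PPP, plus one analytic input, namely the detection probability of a single NM.

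First I would isolate that single-NM ingredient. For one NM of radius $a$ whose center performs a Brownian motion with coefficient $D$ starting from $\x_i+\y_{ij}$, the event $\Event_{ij}$ that its covered region $\bp_{ij}(t)\oplus\ball{\origin,a}$ contains the origin occurs if and only if the NM center enters $\ball{\origin,a}$ at some $s\in[0,t]$; that is, $\Event_{ij}$ is a first-passage event of the center into $\ball{\origin,a}$. By spherical symmetry this is effectively one-dimensional: setting $u=\rho\,S(\rho,t)$, where $S(\rho,t)$ is the probability of not entering $\ball{\origin,a}$ by time $t$ when started at distance $\rho=\|\x_i+\y_{ij}\|$, converts the radial heat equation into the heat equation on $\{\rho>a\}$ with absorbing value $0$ at $\rho=a$, zero initial data, and $u\sim\rho$ at infinity, whose solution gives $S(\rho,t)=1-\tfrac{a}{\rho}\erfc\!\big(\tfrac{\rho-a}{\sqrt{4Dt}}\big)$. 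Hence $\Prob(\Event_{ij}\mid\x_i+\y_{ij}=\z)=h(\z)$ with $h(\z)=\tfrac{a}{\|\z\|}\erfc\!\big(\tfrac{\|\z\|-a}{\sqrt{4Dt}}\big)$ for $\|\z\|>a$, and $h(\z)=1$ for $\|\z\|\le a$ (the NM already contains the target at $t=0$).

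Next I would assemble the cluster-level and network-level probabilities. Conditioned on a parent point $\x_i$, the NMs of cluster $i$ form a PPP with intensity $\mbar f(\y)\,\diff\y$ in relative coordinates; independently marking each NM by whether it detects the target (probability $h(\x_i+\y)$) yields a thinned PPP of detecting NMs, whose void probability gives
\begin{align}
\Prob(\text{no NM of cluster }i\text{ detects}\mid\x_i)=\exp\!\left(-\mbar\int_{\rthree}h(\x_i+\y)\,f(\y)\,\diff\y\right).
\end{align}
Writing the right-hand side as $g(\x_i)$, independence across clusters gives $\Prob(\Event^{\Comp}\mid\Phi_\Pa)=\prod_{\x_i\in\Phi_\Pa}g(\x_i)$, and averaging over the parent PPP via its PGFL yields
\begin{align}
\Prob(\Event^{\Comp})=\Expect\!\left[\prod_{\x_i\in\Phi_\Pa}g(\x_i)\right]=\exp\!\left(-\lambda_\Pa\int_{\rthree}\bigl(1-g(\x)\bigr)\,\diff\x\right).
\end{align}
Substituting $g$ and $h$ and using $\Tdp_{\Pcp}(\lambda_\Pa,t)=1-\Prob(\Event^{\Comp})$ gives \eqref{TDact}.

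The two PGFL evaluations are routine once the independence structure of the doubly PCP is used, together with Fubini to interchange the path expectation with the spatial integrals. The step that needs the most care is the single-NM detection probability: one must verify that $\Event_{ij}$ coincides exactly with the first passage of the NM center into $\ball{\origin,a}$, solve (or cite) the associated absorbing-sphere first-passage problem, and interpret the closed form correctly in the near-target regime $\|\x+\y\|\le a$, where it must be read as the trivial probability $1$ rather than its (larger) analytic continuation; this truncation does not affect the integrals, which converge since $\erfc$ decays like a Gaussian.
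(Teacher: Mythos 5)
Your proposal is correct and follows essentially the same route as the paper's Appendix~\ref{Ap_actual}: two successive applications of the PPP PGFL (the daughter process conditioned on each parent, then the parent process), combined with the single-NM hitting probability $\frac{a}{\|\z\|}\erfc\left(\frac{\|\z\|-a}{\sqrt{4Dt}}\right)$, which the paper simply cites from the literature rather than deriving via the absorbing-sphere first-passage argument you sketch. Your truncation of this quantity to $1$ when $\|\x+\y\|\le a$ is a refinement the paper does not make (its stated formula integrates the closed form over all of $\rthree$), though note that this truncation does alter the inner integral slightly on that bounded region, contrary to your parenthetical claim that it has no effect.
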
The target detection probability for special cases where NMs are initially deployed as MCP or TCP, are given in the following Corollaries 1 and 2, respectively. 
\begin{corollary}\label{MCPcoro}
Under initial MCP based deployment of NMs, the target detection probability within time $t$ is given by
\begin{align}
    &\Tdp_\Mat (\lambda_\Pa,t)=
1-\exp\left(-4\pi\lambda_\Pa \int_{0}^\infty\left(1-\exp\left(-\frac{3\mbar}{2 r^3}\int_{0}^r\int_0^\pi\right.\right.\right. \left.\left.\frac{a}{\gamma(x,y,\theta)}\erfc\left(\frac{\gamma(x,y,\theta)-a}{\sqrt{4Dt}}\right)y^2\sin(\theta)\diff \theta\diff y\bigg)\right)x^2\diff x\right),\label{MCPtarget}\\
&\text{where, }\gamma(x,y,\theta)=\sqrt{x^2+y^2+2xy\cos(\theta)}.\nonumber
\end{align} 
\begin{proof}
By substituting the value of $f(\y)$ from \eqref{Matdens} in \eqref{TDact} we get,
    \begin{align}
    &\Tdp_\Mat (\lambda_\Pa,t)=
1-\exp\left(-\lambda_\Pa \int_{\rthree}\left(1-\exp\left(-\frac{\mbar}{\frac43 \pi r^3}\int_{\ball{\origin,r}}\right.\right.\right. \left.\left.\frac{a}{||\x+\y||}\erfc\left(\frac{||\x+\y||-a}{\sqrt{4Dt}}\right)\diff \y\bigg)\right)\diff \x\right).\label{MCPtargettemp}
\end{align} 

{\color{black}
		We now use the isotropy of the parent point process to simplify the detection probability expression. Without loss of generality, we fix the location of a parent NM cluster center along the positive \(z\)-axis as \(\x = (0,0,x)\), where \(x = \|\x\| \geq 0\). This is justified because the isotropy (rotational invariance) of the point process ensures that the value of the integral over \(\x\) depends only on its radius \(x\), not on angular coordinates.
		
		Expressing the outer integral over \(\mathbf{x} \in \mathbb{R}^3\) in spherical coordinates \(\mathbf{x} = (x, \theta_x, \phi_x)\) with
		\[
		x \geq 0, \quad \theta_x \in [0, \pi], \quad \phi_x \in [0, 2\pi),
		\]
		the rotational symmetry allows integrating out the angular terms \(\theta_x\) and \(\phi_x\), resulting in a factor of \(4\pi\).
		
		Similarly, for the inner integral over daughter points \(\mathbf{y}\) uniformly distributed within the ball \(\mathcal{B}(\mathbf{0}, r)\), we use spherical coordinates \(\mathbf{y} = (y, \theta, \phi)\) with
		\[
		y \in [0, r], \quad \theta \in [0, \pi], \quad \phi \in [0, 2\pi).
		\]
			
		The Euclidean norm of the vector sum \(\mathbf{x} + \mathbf{y}\) is thus
		\[
		\|\mathbf{x} + \mathbf{y}\| = \sqrt{x^2 + y^2 + 2 x y \cos\theta} \equiv \gamma(x,y,\theta).
		\]
		Since this term is independent of the  $\phi$ of $\y$, the integration over $\phi$ in the inner integral yields a factor of $2\pi$. Combining these simplifications, we derive equation~\eqref{MCPtarget}, thus completing the proof.}
\end{proof}
\end{corollary}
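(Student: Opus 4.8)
The plan is to specialize the general PCP detection probability of Theorem~\ref{Thm_actual} to the Mat\'ern case by inserting the explicit uniform daughter density and then collapsing the two three-dimensional integrals into one radial integral (outer) and one double integral (inner) by exploiting rotational symmetry. First I would substitute $f(\y)$ from \eqref{Matdens} directly into \eqref{TDact}. Because the uniform density carries the indicator $\mathds{1}(\y\in\ball{\origin,r})$, the inner integral over $\rthree$ immediately restricts to the ball $\ball{\origin,r}$, while the constant prefactor $1/((4/3)\pi r^3)$ pulls out in front. This yields the intermediate form \eqref{MCPtargettemp}, which is the starting point for the coordinate reduction.

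The key reduction step would be to invoke the isotropy of the parent PPP $\Phi_\Pa$. The inner integral in \eqref{MCPtargettemp} depends on $\x$ only through $\|\x+\y\|$, integrated against the rotationally invariant uniform measure on $\ball{\origin,r}$; consequently its value is a function of $\|\x\|$ alone. I would make this explicit by writing $\x$ in spherical coordinates $(x,\theta_x,\phi_x)$ and, without loss of generality, aligning $\x$ with the positive $z$-axis, i.e.\ $\x=(0,0,x)$ with $x=\|\x\|\ge 0$. Since the outer integrand is then constant in $\theta_x$ and $\phi_x$, integrating these angular variables contributes a factor $4\pi$ and converts the outer volume element to $x^2\,\diff x$, producing the outer $4\pi\lambda_\Pa\int_0^\infty(\cdots)\,x^2\,\diff x$ structure of \eqref{MCPtarget}.

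For the inner integral I would pass to spherical coordinates $\y=(y,\theta,\phi)$ on the ball, so that $y\in[0,r]$, $\theta\in[0,\pi]$, $\phi\in[0,2\pi)$, with volume element $y^2\sin\theta\,\diff\phi\,\diff\theta\,\diff y$. With $\x$ fixed on the $z$-axis, the law of cosines gives $\|\x+\y\|=\sqrt{x^2+y^2+2xy\cos\theta}=\gamma(x,y,\theta)$, which is independent of $\phi$. Hence the $\phi$-integration contributes a factor $2\pi$, and collecting the constants via $\frac{\mbar}{(4/3)\pi r^3}\cdot 2\pi=\frac{3\mbar}{2r^3}$ produces exactly the inner double integral of \eqref{MCPtarget}, completing the derivation.

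The main obstacle is the isotropy justification: one must argue carefully that fixing $\x$ on the $z$-axis loses no generality. The point is that rotating $\x$ by any $R\in SO(3)$ can be absorbed by the change of variable $\y\mapsto R^{-1}\y$, under which the uniform measure on $\ball{\origin,r}$ is invariant and $\|\x+\y\|$ is unchanged; this shows the inner integral is a purely radial function of $\x$, which is precisely what licenses the outer angular integration and the extraction of the $4\pi$ factor. Everything after that is routine Jacobian bookkeeping for the two spherical-coordinate changes of variables.
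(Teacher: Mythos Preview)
Your proposal is correct and follows essentially the same approach as the paper: substitute the uniform daughter density into \eqref{TDact} to obtain \eqref{MCPtargettemp}, then exploit rotational symmetry to align $\x$ with the $z$-axis, extract the $4\pi$ from the outer angular integration, compute $\|\x+\y\|=\gamma(x,y,\theta)$ via the law of cosines, and extract the $2\pi$ from the inner $\phi$-integration. Your explicit $SO(3)$ change-of-variables argument for the isotropy step is slightly more detailed than the paper's justification, but the structure and all key ideas are identical.
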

 
 \begin{corollary}\label{TCPcoro}
     The target detection probability within time $t$ when the location of NMs modeled as TCP is given by
     \begin{align}\label{TDPTCP}
\Tdp_\Tho (\lambda_\Pa,t)&=
1-\exp\left(-4\pi \lambda_\Pa \int_{0}^\infty\left(1-\exp\left(-\frac{\mbar}{\sqrt{2\pi}\sigma^3}\int_{0}^\infty\right.\right.\right.\left.\left.\int_0^\pi\frac{a}{\gamma(x,y,\theta)}\erfc\left(\frac{\gamma(x,y,\theta)-a}{\sqrt{4Dt}}\right)e^{-\frac{y^2}{2\sigma^2}}y^2\sin(\theta)\diff \theta\diff y\bigg)\right)\right.\nonumber\\
&\left.\qquad \times x^2\diff x\right).
\end{align} 
\begin{proof}
    The proof is similar to that of Corollary \ref{MCPcoro}, hence it is skipped for brevity.
\end{proof}

 \end{corollary}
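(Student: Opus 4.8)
The plan is to specialize the general PCP detection probability of Theorem~\ref{Thm_actual}, equation~\eqref{TDact}, to the Thomas cluster process by inserting the Gaussian daughter density. First I would substitute the TCP density $f(\y)$ from \eqref{Thodens} directly into \eqref{TDact}, replacing the generic daughter PDF with the zero-mean Gaussian kernel $\frac{1}{(2\pi\sigma^2)^{3/2}}\exp(-\|\y\|^2/(2\sigma^2))$. This produces a nested integral over $\rthree$ (the outer integral running over cluster centers $\x$, the inner over daughter displacements $\y$), in which the inner integrand now carries the extra weight $e^{-\|\y\|^2/(2\sigma^2)}$ compared with the generic form, together with the Gaussian normalization constant out front.

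Next, mirroring the isotropy argument used in Corollary~\ref{MCPcoro}, I would pass to spherical coordinates in both integrals. For the outer integral I would express $\x$ in spherical coordinates and invoke the rotational invariance of the parent PPP, so that the integrand depends only on $x=\|\x\|$; integrating out the two angular variables contributes a factor $4\pi$ and converts the outer integral into $4\pi\int_0^\infty(\cdots)\,x^2\,\diff x$. For the inner integral I would similarly write $\y=(y,\theta,\phi)$, observe that $\|\x+\y\|=\sqrt{x^2+y^2+2xy\cos\theta}=\gamma(x,y,\theta)$ is independent of $\phi$, and integrate out $\phi$ to contribute a factor $2\pi$ alongside the Jacobian $y^2\sin\theta$.

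The one genuine departure from the MCP derivation, and the step I expect to require the most care, is the radial domain of the inner integral. Because the Gaussian kernel has unbounded support, the daughter radial variable $y$ now ranges over $[0,\infty)$ rather than the bounded interval $[0,r]$ of the uniform MCP density. I would therefore check convergence of the improper $y$-integral: near the configuration where $\gamma\to 0$ (i.e. $y\approx x$ and $\theta\approx\pi$, so that $\x+\y\approx\origin$) the singular factor $a/\gamma$ is controlled by the accompanying $\erfc$ term, while the Gaussian tail $e^{-y^2/(2\sigma^2)}$ guarantees integrability as $y\to\infty$; hence the inner integral is finite. Finally, I would absorb the $2\pi$ from the $\phi$-integration into the normalization constant using $(2\pi\sigma^2)^{3/2}=(2\pi)^{3/2}\sigma^3$, reducing the prefactor $\mbar\cdot 2\pi/(2\pi\sigma^2)^{3/2}$ to $\mbar/(\sqrt{2\pi}\,\sigma^3)$. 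Collecting the outer factor $4\pi$, this simplified inner prefactor, and the two Jacobians then yields exactly~\eqref{TDPTCP}, completing the proof.
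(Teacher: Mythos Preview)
Your proposal is correct and follows exactly the approach the paper intends: the paper's own proof simply states that it is ``similar to that of Corollary~\ref{MCPcoro}'' and omits the details, and your substitution of the Gaussian density \eqref{Thodens} into \eqref{TDact} followed by the same spherical-coordinate/isotropy reduction reproduces precisely that argument. The additional convergence check and the explicit prefactor simplification $2\pi/(2\pi\sigma^2)^{3/2}=1/(\sqrt{2\pi}\,\sigma^3)$ are correct and go slightly beyond what the paper spells out.
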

 Note that Corollary \ref{MCPcoro} and \ref{TCPcoro} give exact expressions of detection probability for specific cases with MCP and TCP based deployments.
 \subsection{Number of Clusters Detecting the Target}
We now present an alternative approach to get the expression for the target detection probability  presented in Theorem \ref{Thm_actual}. We first present the mean number of clusters that detect the target at the origin within time $t$ as given in the following lemma. 
 \begin{lemma}\label{Ap_Thmean1}
For a system with NMs initially distributed
as a PCP, the number of clusters that detect the target at the origin is a Poisson random variable with mean given as (For complete proof, see Appendix \ref{Ap_Thmean}) 
\begin{align}
         N_\Psi=\lambda_\Pa V(t),
     \end{align}
      where 
      \begin{align}
      	V(t)&=\Expect_{\mathcal{N}_\origin,\bps_{j}}\left[\bigm|  \bigcup\nolimits_{\y_{j}\in\mathcal{N}_\origin} \y_{j} + \bps_{j}(t)  \oplus \ball{\origin, a}\bigm| \right]\label{eq:V(t)ns}\\
      	&=\int_{\rthree}\left(1-\exp\left(-\mbar\int_{\rthree}\frac{a}{||\x+\y||}\right.\right. \left.\times\erfc\left(\frac{||\x+\y||-a}{\sqrt{4Dt}}\right)f(\y)\diff \y\bigg)\right)\diff \x \label{eq:V(t)}
      \end{align}
       denotes the volume of the overall region covered by NMs of the typical cluster ($\mathcal{N}_\origin$) within time $t$.
 \end{lemma}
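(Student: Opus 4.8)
The plan is to compute the mean of the number $N_\Psi$ of clusters that detect the target by conditioning on the parent point process $\Phi_\Pa$ and invoking the independence of the daughter processes across distinct clusters. First I would observe that the set of clusters detecting the target at the origin is itself a thinning of $\Phi_\Pa$: a parent point $\x_i$ is retained if and only if at least one of its daughter NMs, after Brownian propagation up to time $t$, covers the origin, i.e. $\origin \in \bigcup_{\y_{ij}\in\mathcal{N}_i}\bp_{ij}(t)\oplus\ball{\origin,a}$. Because the daughter configurations $\{\mathcal{N}_i\}$ together with their independent Brownian paths $\{\bps_{ij}\}$ are i.i.d.\ across $i$ and independent of $\Phi_\Pa$, this is an independent thinning of a PPP, hence the retained points form a PPP; in particular $N_\Psi$ is a Poisson random variable, which establishes the distributional claim. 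Its mean is then obtained from the PPP intensity: by the thinning/Campbell argument, $N_\Psi$ has mean $\lambda_\Pa \int_{\rthree} p(\x)\,\diff\x$, where $p(\x)$ is the probability that a cluster with center at $\x$ detects the origin.

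Next I would identify $\int_{\rthree}p(\x)\,\diff\x$ with $V(t)$ as defined in \eqref{eq:V(t)ns}. The key observation is a standard ``expected-volume'' identity: for any random set $\mathsf{A}\subset\rthree$ generated by a cluster at the origin, $\Expect[|\mathsf{A}|]=\int_{\rthree}\Prob(\x\in\mathsf{A})\,\diff\x$ by Fubini. Applying this with $\mathsf{A}=\bigcup_{\y_j\in\mathcal{N}_\origin}\y_j+\bps_j(t)\oplus\ball{\origin,a}$, and using the translation invariance that makes $\Prob(\x\in\mathsf{A})$ equal to the probability $p(\x)$ that a cluster centered at $-\x$ (equivalently, by symmetry of the setup, at $\x$) covers the origin, gives $\int_{\rthree}p(\x)\,\diff\x = V(t) = \Expect_{\mathcal{N}_\origin,\bps_j}[\,|\mathsf{A}|\,]$, which is \eqref{eq:V(t)ns}.

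Finally I would evaluate $V(t)$ explicitly to obtain \eqref{eq:V(t)}. Conditioning on the parent location $\x$, the daughter NMs of the typical cluster form, by the doubly-Poisson property, a PPP on $\rthree$ with intensity $\mbar f(\y)$ (the daughter positions being $\x+\y$ with $\y$ i.i.d.\ from $f$); each such NM independently fails to cover the origin within time $t$ with some probability, so by the void probability of a PPP the probability that no NM of the cluster covers the origin is $\exp\!\big(-\mbar\int_{\rthree} q(\x,\y)\,f(\y)\,\diff\y\big)$, where $q(\x,\y)$ is the probability that a single NM starting at $\x+\y$ hits the origin within time $t$. Here I would invoke the single-NM hitting probability for a spherical absorber of radius $a$ under 3D Brownian motion, namely $q(\x,\y)=\frac{a}{\|\x+\y\|}\erfc\!\big(\frac{\|\x+\y\|-a}{\sqrt{4Dt}}\big)$ (this is the classical Smoluchowski-type first-passage result already used in the framework of \cite{sabu2024d}). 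Substituting into $V(t)=\int_{\rthree}\big(1-\Prob(\text{cluster at }\x\text{ misses origin})\big)\diff\x$ yields \eqref{eq:V(t)}.

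\textbf{Main obstacle.} The routine parts are the Fubini/expected-volume identity and the substitution of the known single-NM hitting probability. The step requiring the most care is justifying that the thinning is genuinely independent across clusters and that the retained parents therefore form a Poisson process --- this rests on the doubly-Poisson structure (daughter processes i.i.d.\ and independent of $\Phi_\Pa$, Brownian paths independent across all NMs) and on the fact that whether cluster $i$ detects the target depends only on $(\mathcal{N}_i,\{\bps_{ij}\}_j)$ and on $\x_i$, with no inter-cluster coupling; I would spell this out carefully since it is what upgrades ``mean number of detecting clusters'' to ``$N_\Psi$ is Poisson.''
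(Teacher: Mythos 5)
Your proposal is correct and follows essentially the same route as the paper's Appendix proof: independent thinning of $\Phi_\Pa$ to get the Poisson property, Campbell's theorem for the mean, the Fubini/expected-volume identity (with the symmetry of the cluster law) to identify the integral of the per-cluster detection probability with $V(t)$ in \eqref{eq:V(t)ns}, and the void probability (PGFL) of the daughter PPP combined with the single-NM hitting probability \eqref{eq:hp3d} to obtain \eqref{eq:V(t)}. No substantive differences or gaps.
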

{ The proof of the above lemma is obtained by treating the region covered by the NMs as a Boolean model with cluster centers as germs and combined region covered by NMs of each cluster as its grain (inner union in \eqref{eq:psit}).
	
 The target is detected when $N_\Psi$ is not zero. The resulting detection probability is therefore given by the following theorem.
 \begin{theorem}\label{meanpro}
{\color{black}The probability of detecting a point target located at the origin within time $t$, for a network of NMs initially distributed as a PCP, can be expressed in terms of the expected volume covered by the NMs of a typical cluster as}
     \begin{align}
   & \Tdp_{\Pcp}(\lambda_\Pa,t)=
1-\exp\left(-\lambda_\Pa V(t)\right). \label{TDact2}
\end{align} 

 \begin{proof}
     The number of clusters detecting the target is Poisson distributed with parameter $N_\Psi$. A cluster is said to detect the target if any of its NMs detects the target. Therefore, the target detection probability can be obtained from the void probability $1-\exp(-N_\Psi)$.
 \end{proof}
 \end{theorem}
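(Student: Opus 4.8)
The plan is to obtain Theorem~\ref{meanpro} directly from Lemma~\ref{Ap_Thmean1} combined with the void probability of a Poisson random variable, so the argument is short. First I would record the logical equivalence between the detection event and the event that at least one cluster detects the target. From the nested-union form \eqref{event1}, $\Event = \bigcup_{\x_i \in \Phi_\Pa}\bigcup_{\y_{ij}\in\mathcal{N}_i}\Event_{ij}$; performing the union over $j$ first, the $i$th cluster detects the target exactly when $\bigcup_{j}\Event_{ij}$ occurs, and therefore $\Event$ holds if and only if the number of detecting clusters is at least one. This step is immediate from the structure of $\Psi_t$ in \eqref{eq:psit}, but it is the one place that needs to be stated carefully.

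Next I would invoke Lemma~\ref{Ap_Thmean1}: the number of clusters that detect the target at the origin within time $t$, call it $N$, is a Poisson random variable with mean $N_\Psi = \lambda_\Pa V(t)$, where $V(t)$ is the expected volume \eqref{eq:V(t)} of the region swept by the NMs of a typical cluster. Combining with the first step,
\[
\Tdp_{\Pcp}(\lambda_\Pa,t)=\Prob(\Event)=\Prob(N\geq 1)=1-\Prob(N=0).
\]
Since $N$ is Poisson with mean $\lambda_\Pa V(t)$, its void probability is $\Prob(N=0)=e^{-\lambda_\Pa V(t)}$, which gives \eqref{TDact2} and completes the proof.

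Because all the probabilistic substance --- the Poisson law of the detecting-cluster count and the identification of its mean with $\lambda_\Pa V(t)$ --- is already supplied by Lemma~\ref{Ap_Thmean1}, there is no genuine obstacle at this level; the work is reduced to the elementary void-probability computation. As a sanity check, substituting the explicit integral \eqref{eq:V(t)} for $V(t)$ into \eqref{TDact2} recovers precisely the formula \eqref{TDact} of Theorem~\ref{Thm_actual}, so the two derivations are mutually consistent. If instead one wanted to prove the statement without recourse to Lemma~\ref{Ap_Thmean1}, the real difficulty would be re-establishing that lemma: one would view $\Psi_t$ as a Boolean germ--grain model with the PPP $\Phi_\Pa$ of cluster centers as germs and the random set $\bigcup_{j}\bigl(\y_j+\bps_j(t)\oplus\ball{\origin,a}\bigr)$ as the i.i.d.\ grain attached to each germ, and then evaluate the capacity functional of this Boolean model at the singleton $\{\origin\}$; the expected grain volume appearing there is exactly $V(t)$. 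Since Lemma~\ref{Ap_Thmean1} is available to us, this route is unnecessary here.
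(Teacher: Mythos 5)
Your proof is correct and follows essentially the same route as the paper: invoke Lemma~\ref{Ap_Thmean1} for the Poisson law of the detecting-cluster count with mean $\lambda_\Pa V(t)$, identify detection with at least one cluster detecting, and read off the void probability $1-\exp(-\lambda_\Pa V(t))$. Your version merely spells out the event equivalence and the $\Prob(N=0)$ step more explicitly than the paper does.
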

 {\color{black}While the expressions presented in Theorems \ref{Thm_actual} and \ref{meanpro} are the same, Theorem \ref{Thm_actual} uses the PGFL approach, which is a key tool in stochastic geometry. This approach provides a structured way to derive the detection probability by directly accounting for the random distribution of NMs, making it easier to extend the analysis to more complex scenarios, such as mobile target models. On the other hand, Theorem \ref{meanpro} uses a Boolean model perspective, which offers an intuitive understanding by viewing detection as the coverage of a target by the swept volume $V(t)$. This technique is useful to derive the bounds of the detection probability.}  Now we present simple closed-form bounds for the target detection probability.
  \subsection{Bounds on Target Detection Probability}
  The following lemma gives the bounds of the average volume covered by NMs in the typical cluster.
  \begin{lemma}\label{volineqlem}
 The average volume of the region covered by NMs in the typical cluster within time $t$ is bounded as (for proof, see Appendix \ref{VolineqAp}),
  \begin{align}
               &\left(1-\exp(-\mbar)\right)\times W(t)\leq 
V(t)\leq \mbar \times W(t),\label{volineqex} 
  \end{align}
where $W(t)$ denotes the average volume covered within time $t$ by
a spherical particle of radius $a$ under  Brownian motion.
  \end{lemma}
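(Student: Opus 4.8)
The plan is to obtain both inequalities directly from the closed form \eqref{eq:V(t)} for $V(t)$, reducing the statement to two elementary scalar inequalities for the concave map $u\mapsto 1-e^{-\mbar u}$ together with one Fubini-type identity. Write $g(\z):=\min\{1,\,(a/\|\z\|)\,\erfc((\|\z\|-a)/\sqrt{4Dt})\}$ for the probability that a single spherical NM of radius $a$ started at $\z$ (relative to the target at the origin) intersects the target within time $t$, so that $0\le g(\z)\le 1$. By Fubini's theorem (Robbins' identity: expected volume $=$ integral of the coverage probability), the expected swept volume of one such particle is $W(t)=\int_{\rthree}g(\z)\,\diff\z$. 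Putting $h(\x):=\int_{\rthree}g(\x+\y)\,f(\y)\,\diff\y$, expression \eqref{eq:V(t)} reads $V(t)=\int_{\rthree}\big(1-e^{-\mbar h(\x)}\big)\,\diff\x$, and since $g\le 1$ and $\int f=1$ we have $0\le h(\x)\le 1$ for every $\x$. The key identity I would invoke is
\[
\int_{\rthree}h(\x)\,\diff\x=\int_{\rthree}f(\y)\Big(\int_{\rthree}g(\x+\y)\,\diff\x\Big)\diff\y=W(t)\int_{\rthree}f(\y)\,\diff\y=W(t),
\]
using translation invariance of Lebesgue measure and $\int f=1$; the interchange of integrals is justified by Tonelli since the integrand is nonnegative.

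For the upper bound I would apply $1-e^{-u}\le u$ (valid for all $u\ge 0$) with $u=\mbar h(\x)$ and integrate over $\x$, obtaining $V(t)\le \mbar\int_{\rthree}h(\x)\,\diff\x=\mbar W(t)$. For the lower bound I would use that $u\mapsto 1-e^{-\mbar u}$ is concave (its second derivative $-\mbar^{2}e^{-\mbar u}$ is negative) and vanishes at $u=0$, so on $[0,1]$ it lies above the chord through $(0,0)$ and $(1,1-e^{-\mbar})$, i.e.\ $1-e^{-\mbar u}\ge(1-e^{-\mbar})\,u$ for $u\in[0,1]$. Applying this with $u=h(\x)\in[0,1]$ and integrating gives $V(t)\ge (1-e^{-\mbar})\int_{\rthree}h(\x)\,\diff\x=(1-e^{-\mbar})W(t)$, which is \eqref{volineqex}.

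I do not expect a genuine obstacle; the one delicate point is ensuring $h(\x)\in[0,1]$ so that the chord inequality applies on the correct interval — this is exactly where the integrand $(a/\|\z\|)\,\erfc(\cdot)$ must be read as a probability (capped at $1$ on $\|\z\|\le a$) — together with the routine Tonelli justification of $\int h=W(t)$. As a consistency check, $1-e^{-\mbar}\le\mbar$, so the two bounds are ordered correctly and coincide to first order as $\mbar\to 0$. Finally, the same bounds can be read off more directly from the expectation form \eqref{eq:V(t)ns}: by subadditivity of Lebesgue measure over a union and Campbell's (Wald's) identity for the $\mathrm{Poisson}(\mbar)$ number of NMs in the typical cluster $\mathcal{N}_\origin$, $V(t)\le\mbar W(t)$; and bounding the union below by a single grain times $\Ind(\mathcal{N}_\origin\neq\emptyset)$, $V(t)\ge\Prob(\mathcal{N}_\origin\neq\emptyset)\,W(t)=(1-e^{-\mbar})W(t)$, since each grain has expected volume $W(t)$ by translation invariance and is independent of the cluster size.
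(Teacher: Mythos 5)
Your proposal is correct, and your main argument takes a genuinely different route from the paper's. The paper works pathwise on the random set itself: citing an inclusion-type inequality (their Eq.~4.30 of the Albeverio--Pham reference), it sandwiches the volume of the union $\bigl|\bigcup_{\y_j}\y_j+\bps_j(t)\oplus\ball{\origin,a}\bigr|$ between the volume of a single Wiener sausage (on the event the cluster is nonempty) and $m$ times that volume, then applies the law of total expectation with the $\mathrm{Poisson}(\mbar)$ count to produce the factors $1-e^{-\mbar}$ and $\mbar$; this is exactly the argument you sketch in your closing paragraph via subadditivity plus Wald and the single-grain lower bound. Your primary proof instead works analytically on the closed-form \eqref{eq:V(t)}: with $h(\x)=\int_{\rthree}g(\x+\y)f(\y)\diff\y\in[0,1]$ and the Tonelli/translation-invariance identity $\int_{\rthree}h(\x)\diff\x=W(t)$, the bounds reduce to $1-e^{-\mbar u}\le \mbar u$ and the chord inequality $1-e^{-\mbar u}\ge(1-e^{-\mbar})u$ on $[0,1]$. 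This buys self-containedness (no external inequality is invoked) and it correctly flags that the kernel $\frac{a}{\|\z\|}\erfc\bigl(\frac{\|\z\|-a}{\sqrt{4Dt}}\bigr)$ must be read as the hitting probability, i.e.\ capped at $1$ for $\|\z\|\le a$ — a point the paper glosses over, since \eqref{eq:hp3d} is only valid outside the ball, and both the identity $\int_{\rthree}g(\z)\diff\z=W(t)$ and the applicability of the chord inequality need that reading. The paper's route, in exchange, never touches the integral expression, so it holds for any displacement density $f$ without interpreting or evaluating the kernel, and it is somewhat shorter. Both arguments yield the same constants, and your consistency check $1-e^{-\mbar}\le\mbar$ is a nice touch.
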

  From \cite{berezhkovskii1989, sabu2020b}, $W(t)$ is given as
  \begin{align}
  	W(t)&={(4\pi a^3)}/{3}+4\pi Da t+8a^2\sqrt{\pi D t}.\label{wsv}
  \end{align}
 Using Theorem \ref{meanpro} and Lemma \ref{volineqlem}, the upper and lower bound of the target detection probability can be obtained, which are given below.
\begin{corollary}
    \label{ThmPCPUB}
The upper bound of the detection probability of  a point target at the origin by any of the NMs within time $t$ for PCP is 
    \begin{align}
    \Tdp_{\Pcp}(\lambda_\Pa,t)&\leq
1-\exp\left(-\lambda_\Pa \mbar W(t)\right).\label{UBeq}
\end{align}
\end{corollary}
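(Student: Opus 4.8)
The plan is to combine the exact expression from Theorem~\ref{meanpro} with the volume bound from Lemma~\ref{volineqlem}, using only the monotonicity of the map $x\mapsto 1-e^{-x}$. First I would recall that, by Theorem~\ref{meanpro},
\begin{align}
\Tdp_{\Pcp}(\lambda_\Pa,t)=1-\exp\left(-\lambda_\Pa V(t)\right),\nonumber
\end{align}
so that controlling $\Tdp_{\Pcp}$ is equivalent to controlling the average swept volume $V(t)$ of the typical cluster.

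Next I would invoke the upper bound $V(t)\le \mbar\, W(t)$ established in Lemma~\ref{volineqlem}, where $W(t)$ is the average volume swept by a single Brownian NM as given in~\eqref{wsv}. Since $\lambda_\Pa>0$, multiplying through preserves the inequality, giving $\lambda_\Pa V(t)\le \lambda_\Pa \mbar\, W(t)$. I would then apply the fact that $1-e^{-x}$ is nondecreasing on $[0,\infty)$ (equivalently, $e^{-x}$ is nonincreasing), so that replacing $\lambda_\Pa V(t)$ by the larger quantity $\lambda_\Pa \mbar\, W(t)$ can only increase the value:
\begin{align}
\Tdp_{\Pcp}(\lambda_\Pa,t)=1-\exp\left(-\lambda_\Pa V(t)\right)\le 1-\exp\left(-\lambda_\Pa \mbar\, W(t)\right),\nonumber
\end{align}
which is exactly~\eqref{UBeq}.

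There is essentially no substantive obstacle here; the only point requiring a word of care is the direction of the inequality, i.e.\ checking that the argument of the exponential is being made more negative (not less) when $V(t)$ is overestimated, and that positivity of $\lambda_\Pa$ keeps the ordering intact. One could additionally remark that this upper bound coincides with the PPP detection probability with NM density $\lambda_{\diff}=\mbar\lambda_\Pa$, which gives the bound a natural operational interpretation: clustering never outperforms the corresponding uniform (PPP) deployment with the same overall NM density.
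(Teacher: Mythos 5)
Your proposal is correct and is exactly the paper's argument: the corollary is stated as following from Theorem~\ref{meanpro} together with the upper bound $V(t)\le \mbar\,W(t)$ of Lemma~\ref{volineqlem}, with monotonicity of $x\mapsto 1-e^{-x}$ doing the rest. The closing remark about coincidence with the PPP detection probability at density $\mbar\lambda_\Pa$ matches the paper's own discussion in Section~\ref{PerfComp}.
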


\begin{corollary}\label{ThmPCPLB}
 The lower bound of the detection probability for a point target at the origin, detected by any of the NMs within time $t$ under a PCP, is
\begin{align}
    \Tdp_{\Pcp}(\lambda_\Pa,t)&\geq
1 - \exp\left(-\lambda_\Pa  \left(1-\exp(-\mbar)\right) W(t)\right). \label{LBeq}
\end{align}

\end{corollary}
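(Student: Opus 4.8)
\textbf{Proof proposal for Corollary~\ref{ThmPCPLB}.} The plan is to combine Theorem~\ref{meanpro} with the lower bound on $V(t)$ provided by Lemma~\ref{volineqlem}. Theorem~\ref{meanpro} states that $\Tdp_{\Pcp}(\lambda_\Pa,t) = 1 - \exp(-\lambda_\Pa V(t))$, so the detection probability is a strictly increasing function of $V(t)$. Lemma~\ref{volineqlem} gives $V(t) \geq (1-\exp(-\mbar)) W(t)$, with $W(t)$ the closed-form expression in \eqref{wsv}. Substituting this inequality into the monotone function $1 - \exp(-\lambda_\Pa \,\cdot\,)$ immediately yields the claimed bound.

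\textbf{Key steps, in order.} First, invoke Theorem~\ref{meanpro} to write $\Tdp_{\Pcp}(\lambda_\Pa,t) = 1 - \exp(-\lambda_\Pa V(t))$. Second, observe that for fixed $\lambda_\Pa > 0$ the map $v \mapsto 1 - \exp(-\lambda_\Pa v)$ is monotonically increasing in $v$, so any lower bound on $V(t)$ transfers directly to a lower bound on $\Tdp_{\Pcp}$. Third, apply the left-hand inequality of \eqref{volineqex} from Lemma~\ref{volineqlem}, namely $V(t) \geq (1-\exp(-\mbar)) W(t)$, to conclude
\begin{align}
\Tdp_{\Pcp}(\lambda_\Pa,t) = 1 - \exp(-\lambda_\Pa V(t)) \geq 1 - \exp\left(-\lambda_\Pa (1-\exp(-\mbar)) W(t)\right),
\end{align}
which is exactly \eqref{LBeq}. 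Optionally, one may substitute the explicit form of $W(t)$ from \eqref{wsv} to present the bound in fully closed form.

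\textbf{Main obstacle.} There is essentially no obstacle in this corollary itself: it is a one-line consequence of two previously established results plus the elementary monotonicity of the exponential. The real work has already been done in proving Lemma~\ref{volineqlem} (the lower bound on $V(t)$), where the nontrivial step is showing $V(t) \geq (1-\exp(-\mbar)) W(t)$ — this presumably relies on a Jensen-type or convexity argument applied to the expected volume of the union of the grains of a single cluster, bounding the expected union volume below in terms of the expected volume of a single grain $W(t)$ and the probability $1-\exp(-\mbar)$ that the cluster is nonempty. Given Lemma~\ref{volineqlem} as stated, the proof of the corollary is immediate.
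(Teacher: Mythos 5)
Your proposal is correct and matches the paper's own argument: the paper derives Corollary~\ref{ThmPCPLB} precisely by combining Theorem~\ref{meanpro} with the lower bound on $V(t)$ from Lemma~\ref{volineqlem}, using the monotonicity of $v \mapsto 1-\exp(-\lambda_\Pa v)$ exactly as you describe. Nothing is missing; the heavy lifting indeed resides in Lemma~\ref{volineqlem}, which the corollary simply invokes.
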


\subsection{Comparison of Target Detection Using Clustered NMs vs PPP-Distributed NMs}\label{PerfComp}
The spatial distribution of the NMs can significantly affect the efficiency of target detection. We now compare the detection performance of clustered NMs (with initial deployment as PCP) with that of non clustered NMs (with initial deployment as PPP). Let us consider a target detection system where NMs are initially deployed as homogeneous  PPP in 3D with density $\lambda_\Pa \mbar$. Its target detection probability can be computed  from Lemma 2 and Theorem 1 of \cite{sabu2024d} by substituting $\lambda_\Pa \mbar$ for NM density and setting $r=0$, and is given as
\begin{align}
    &\Dp_{\PPP}(\mbar\lambda_\Pa,t)=1-\exp\left(-\lambda_\Pa \mbar W(t)\right).\label{PPPmbar}
\end{align}
By comparing \eqref{UBeq} and \eqref{PPPmbar}, we can see that the detection probability for unclustered NMs as given in \eqref{PPPmbar} serves as an upper bound to the detection probability with clustered NMs with initial PCP deployment. This indicates that clustered deployment may degrade detection probability. {\color{black}The performance gap between PCP and PPP arises because PCP's clustered structure concentrates NMs around parent points, creating dense and sparse regions of NMs. This effect is more prominent for cases with low \(\lambda_\Pa\) or small \(r/\sigma\). This situation leads to detection gaps, reducing the likelihood of an NM being near the target compared to the uniform coverage of PPP with equivalent density \(\lambda_\Pa \mbar\). This performance gap is also evident in the numerical results (see Fig.~\ref{fig:comparison}).}

 Further, comparing \eqref{LBeq} and \eqref{PPPmbar}, the system with clustered NMs performs better than a PPP distributed detection system with density $\lambda_\Pa \left(1-\exp(-\mbar)\right) $, \ie a system with a single NM in each of the non-empty cluster centers. This tells us that having more NMs in clusters compared to a single NM improves the detection. However, if NMs are unclustered, then the detection probability further increases. 
Therefore, 
\begin{align}
    \Dp_{\PPP}( \left(1-\exp(-\mbar)\right)\lambda_\Pa ,t)\leq \Dp_{\Pcp}(\lambda_\Pa,t)\leq \Dp_{\PPP}(\mbar\lambda_\Pa,t).
\end{align}

\subsection{Approximate Target Detection Probability}\label{Subsec:Approx}
In the following theorem, we present an approximate expression for the target detection probability. {\color{black}The approximation is obtained by assuming that all the NMs of a cluster are initially (at time $t=0$) located at the respective cluster centers, \ie the PDF $f(\y)$ becomes a Dirac delta function $\delta(\y)$. For MCP and TCP, this translates to $r=0$ and $\sigma=0$, respectively.} This approximate detection probability remains the same for any doubly PCP, including MCP and TCP.
\begin{theorem}\label{ThApprox}
The approximate target detection probability for a system with NMs distributed as doubly PCP is 
\begin{align}
    \Tdp_{\Pcp}(\lambda_\Pa,t)&\approx
1-\exp\left(-4\pi \lambda_\Pa \int_{0}^\infty\left(1-\exp\left(-\mbar {a}/{x} \right.\right.\right. \left.\left.\erfc\left(\frac{x-a}{\sqrt{4Dt}}\right)\bigg)\right) x^2\diff x\right).\label{TDPapprox}
\end{align}
\end{theorem}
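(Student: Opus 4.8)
The plan is to obtain \eqref{TDPapprox} directly from the exact formula of Theorem \ref{Thm_actual} by enforcing the stated modelling approximation: every NM of a cluster starts at its cluster center, so that the daughter‑point PDF $f(\y)$ is replaced by the Dirac delta $\delta(\y)$ (which corresponds to $r=0$ for the MCP and $\sigma=0$ for the TCP). Since $\delta$ is not itself a valid doubly‑PCP kernel, the statement is an approximation rather than an identity, so the proof is essentially a substitution-and-simplification argument, with a short remark afterward to justify that the substitution is consistent.

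The first step is the inner integral in \eqref{TDact}. Substituting $f(\y)=\delta(\y)$ and using the sifting property of the delta gives $\int_{\rthree}\frac{a}{\|\x+\y\|}\erfc\!\big(\frac{\|\x+\y\|-a}{\sqrt{4Dt}}\big)\delta(\y)\,\diff\y=\frac{a}{\|\x\|}\erfc\!\big(\frac{\|\x\|-a}{\sqrt{4Dt}}\big)$, so the detection probability collapses to $1-\exp\big(-\lambda_\Pa\int_{\rthree}\big(1-\exp(-\mbar\,\tfrac{a}{\|\x\|}\erfc(\tfrac{\|\x\|-a}{\sqrt{4Dt}}))\big)\diff\x\big)$. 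The same reduction follows equivalently from Theorem \ref{meanpro} together with Lemma \ref{Ap_Thmean1} by setting $f=\delta$ in \eqref{eq:V(t)}, in which case $V(t)$ becomes the expected swept volume of a \emph{single} particle starting at each cluster center.

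The second step evaluates the outer integral over $\x\in\rthree$ in spherical coordinates $\x=(x,\theta_x,\phi_x)$, $\diff\x=x^2\sin\theta_x\,\diff\theta_x\,\diff\phi_x\,\diff x$. Because the integrand now depends on $\x$ only through $x=\|\x\|$, the angular part factors out and contributes $\int_0^{2\pi}\!\int_0^{\pi}\sin\theta_x\,\diff\theta_x\,\diff\phi_x=4\pi$, leaving exactly the claimed expression \eqref{TDPapprox}. A final remark closes the argument: after $f$ has been replaced by $\delta$, all information about the intra‑cluster spread (the radius $r$ of the MCP, the variance $\sigma^2$ of the TCP, or any other doubly‑PCP kernel) has been discarded, so the resulting formula is identical for every doubly PCP, which is the universality asserted in the statement.

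The only point requiring care — not a genuine obstacle — is to confirm that this substitution is the correct small‑spread limit, i.e. that \eqref{TDPapprox} is the $r\to0$ limit of \eqref{MCPtarget} and the $\sigma\to0$ limit of \eqref{TDPTCP}. This follows by dominated convergence, using that $\gamma(x,y,\theta)\to x$ as $y\to0$, that $\frac{a}{\gamma}\erfc(\cdot)\le \frac{a}{\gamma}$, and that the normalized cluster kernels integrate this bound to a finite quantity on the relevant domain (with the outer exponential keeping the $x$‑integral convergent). This both justifies the delta substitution and makes precise the sense in which \eqref{TDPapprox} approximates the exact \eqref{TDact}.
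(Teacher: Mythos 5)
Your proposal is correct and follows essentially the same route as the paper's proof: the paper likewise enforces the co-location approximation (equivalently $\|\x+\y\|\approx\|\x\|$, i.e.\ $f(\y)\to\delta(\y)$) inside the exact expression \eqref{TDact} and then passes to spherical coordinates to extract the $4\pi x^2$ factor. Your closing dominated-convergence remark justifying \eqref{TDPapprox} as the $r\to 0$ (resp.\ $\sigma\to 0$) limit is a small extra rigor beyond what the paper states, but it does not change the argument.
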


\begin{proof}
{\color{black}	When NMs are concentrated at their respective parent point (i.e., $r=0$ for MCP or $\sigma=0$ for TCP) at time $t=0$, the daughter points $\y_{ij}$ are effectively located at the origin relative to their parent point $\x_i$, so $\x_i + \y_{ij} \approx \x_i$, and thus $\|\x + \y\| \approx \|\x\|$. Substituting into \eqref{TDact}, the inner integral simplifies as
\begin{align}
	\Tdp_{\Pcp}(\lambda_\Pa,t)&\approx
	1-\exp\left(-\lambda_\Pa \int_{\rthree}\left(1-\exp\left(-\mbar\right.\right.\right. \left.\left.\frac{a}{||\x||}\erfc\left(\frac{||\x||-a}{\sqrt{4Dt}}\right)\bigg)\right)\diff \x\right).\label{TDPapprox2}
\end{align}
Changing to the polar coordinates gives \eqref{TDPapprox}}.
\end{proof}

{\color{black}Note that this approximation assumes NMs are initially co-located with their cluster centers, effectively treating each cluster as a single point with $\mbar$ NMs. This simplifies the double integral in \eqref{TDact} to a single integral, making numerical evaluation more tractable while remaining accurate for small $r$ or $\sigma$.}

\subsection{Target Detection Probability of NM Network Without Motion}
 Now, we focus on the probability of detection of the target under which the NMs remain stationary at their initial positions \ie they do not undergo Brownian motion. This scenario corresponds to analyzing the system at $t=0$ where each NM is considered static and occupies a spherical volume $\ball{\origin,a}$.

 In this scenario, the target detection event occurs if the target intersects with the region occupied by any NM at its initial location. 
 The event of interest, denoted as $\Event_\origin$, is defined as
\begin{align} \Event_{\origin} = 
 \bigcup\nolimits_{\x_i \in \Phi_\Pa} \bigcup\nolimits_{\y_{ij} \in \mathcal{N}_i} \bigg\{\left( \x_i + \y_{ij} + \ball{\origin,a} \right)\cap \origin
\neq \phi\bigg\}.\nonumber
 \end{align}

The event detection probability is the probability that at least one NM detects the target by intersecting it. The following result gives the target detection probability for a network of stationary NMs deployed as an MCP.

\begin{corollary}\label{Cor_statMCP}
 For a network of stationary NMs deployed as an MCP, the target detection probability is given by (for proof, see Appendix \ref{Ap_statMCP})
    \begin{align}
       \Tdp_\Mat^0&=1-\exp\left(-4\pi\lambda_{\rm p}\int_{0}^\infty\right.\left.\quad\left(1-\exp\left(-\frac{\bar{m}}{(4/3)\pi r^{3}}\mathcal{A}(a,r,x)\right)\right)x^2\diff x\right), \label{target-detection-MCP0}
    \end{align} where $\mathcal{A}(a,r,x)$ is the volume of intersection of $\ball{\origin,a}$ and $\ball{\x,r}$, and is given as \cite{kern1934}
    \begin{align*}
        &\mathcal{A}(a,r,x)=\begin{cases}
        \frac{4}{3}\pi\min(a^{3},r^{3})\quad \text{if } 0< x\leq|a-r|,\\
        \frac{\pi(a+r-x)^2}{12x}\times\qquad \text{if }|a-r|\leq x< a+r\\
\left(x^2+2x(a+r)-3(r-a)^2\right). 
        \end{cases}
    \end{align*} 
\end{corollary}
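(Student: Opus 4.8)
## Proof Proposal

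The plan is to specialize the general PCP detection probability formula to the stationary ($t=0$) case with MCP deployment, following the same isotropy-based reduction used in Corollary~\ref{MCPcoro}. First I would observe that at $t=0$ the Brownian path $\bps_{ij}(t)$ collapses to the single point $\origin$, so the swept region $\bp_{ij}(0)\oplus\ball{\origin,a}$ is simply the ball $\ball{\x_i+\y_{ij},a}$. Consequently the detection event $\Event_\origin$ asks whether the origin lies in this ball, equivalently whether $\|\x_i+\y_{ij}\|\le a$, i.e.\ whether $\origin\in\x_i+\y_{ij}+\ball{\origin,a}$. The cleanest route is to invoke the Boolean-model / void-probability argument of Theorem~\ref{meanpro}: treating the parent points as germs and the per-cluster union of NM balls as the grain, the number of clusters covering the origin is Poisson with mean $\lambda_\Pa V(0)$, where $V(0)=\Expect\big[\,|\bigcup_{\y_j\in\mathcal{N}_\origin}\y_j+\ball{\origin,a}|\,\big]$ is the expected volume of the typical cluster's occupied region.

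Next I would evaluate $V(0)$ using the fact that $\mathcal{N}_\origin$ is a PPP with intensity $\mbar f(\y)$ on $\ball{\origin,r}$. By the standard Boolean-model capacity functional (the same void-probability identity used to derive \eqref{TDact}), the probability that a fixed point $\x$ is \emph{not} covered by any NM ball of the typical cluster is $\exp\!\big(-\mbar\int_{\rthree}\Ind(\|\x-\y\|\le a)\,f(\y)\,\diff\y\big)$. The inner integral $\int_{\rthree}\Ind(\|\x-\y\|\le a)f(\y)\diff\y$ is exactly the fraction of the ball $\ball{\origin,r}$ that lies within distance $a$ of $\x$, i.e.\ $\mathcal{A}(a,r,\|\x\|)/((4/3)\pi r^3)$, where $\mathcal{A}(a,r,x)=|\ball{\origin,a}\cap\ball{\x,r}|$. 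Hence
\begin{align}
V(0)=\int_{\rthree}\left(1-\exp\left(-\frac{\mbar}{(4/3)\pi r^3}\,\mathcal{A}(a,r,\|\x\|)\right)\right)\diff\x.\nonumber
\end{align}
Because the integrand depends only on $\|\x\|$, switching to spherical coordinates and integrating out the two angular variables yields the factor $4\pi$ and the single radial integral in \eqref{target-detection-MCP0}. Plugging $V(0)$ into $\Tdp_\Mat^0=1-\exp(-\lambda_\Pa V(0))$ completes the derivation.

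The remaining ingredient is the explicit formula for $\mathcal{A}(a,r,x)$, the volume of intersection of two spheres of radii $a$ and $r$ whose centers are separated by $x$; this is the classical lens-volume formula, and I would simply cite \cite{kern1934} for the three cases ($x\le|a-r|$ giving full containment with volume $\frac{4}{3}\pi\min(a^3,r^3)$; $|a-r|\le x<a+r$ giving the stated rational expression; and $x\ge a+r$ giving zero, which makes the integrand vanish and truncates the radial integral at $a+r$). The main obstacle — really the only nonroutine point — is justifying the identification of the inner integral with the normalized intersection volume and confirming the geometry of the lens formula, including the boundary cases; everything else is a direct specialization of results already established in the excerpt (the Boolean-model void probability of Theorem~\ref{meanpro} and the isotropy reduction of Corollary~\ref{MCPcoro}).
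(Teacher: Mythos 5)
Your proposal is correct, and the computation it performs is mathematically the same as the paper's, but you package it differently. The paper's Appendix proof works directly on the indicator product for $\Event_\origin$: it applies the PGFL of the daughter PPP (with uniform intensity $\mbar/((4/3)\pi r^3)$ on $\ball{\origin,r}$) to turn the per-cluster non-detection probability into $\exp\bigl(-\tfrac{\mbar}{(4/3)\pi r^3}\,|\ball{\origin,a}\cap\ball{\x_i,r}|\bigr)$, then applies the PGFL of the parent PPP and the isotropy reduction to reach \eqref{target-detection-MCP0}; it never routes through Theorem~\ref{meanpro}. You instead invoke the Boolean-model/void-probability formulation ($\Tdp=1-\exp(-\lambda_\Pa V(0))$ from Theorem~\ref{meanpro} and Lemma~\ref{Ap_Thmean1}) and compute $V(0)$ via the capacity functional of the daughter PPP, identifying the inner integral with the normalized lens volume $\mathcal{A}(a,r,\|\x\|)/((4/3)\pi r^3)$ — which is exactly the same two exponentials, just with the parent-PGFL step absorbed into the already-proved void-probability result. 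Your framing buys economy (reuse of Theorem~\ref{meanpro} rather than re-deriving the two-level PGFL) and makes the geometric meaning of $\mathcal{A}$ transparent, including the correct truncation of the radial integral at $x=a+r$; the paper's version is self-contained and mirrors the structure of the proof of Theorem~\ref{Thm_actual}. One point in your favor worth keeping: you argue the $t=0$ case geometrically (coverage by the ball $\ball{\x_i+\y_{ij},a}$) rather than by letting $t\to 0$ in the $\erfc$ hitting formula \eqref{eq:hp3d}, which would not reduce cleanly to the indicator because of the $a/\|\x+\y\|$ prefactor — this is the right way to handle it and is also what the paper does.
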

Similarly, we can obtain the detection probability of the target when the NMs are deployed as TCP, which is given in the following theorem.
\begin{corollary}\label{Cor_statTCP}
For a network of stationary NMs deployed as a TCP,  the target detection probability is (for proof, see Appendix \ref{Ap_statTCP})
      \begin{align}
        &\Tdp_\Tho^0=1-\exp\left(-4\pi\lambda_{\rm p}\int_0^\infty\left(1-\exp\left(-\int_{0}^a\right.\right.\right.\left.\left.\left.\frac{\sqrt{2}\mbar}{\sqrt{\pi\sigma^2}}\exp\left(-\frac{x^2+y^2}{2\sigma^{2}}\right)\sinh\left(\frac{xy}{\sigma^2}\right) \frac{y}{x}\diff y\right)\right)x^2\diff x\right).\label{target-detection-TCP0}
    \end{align}
\end{corollary}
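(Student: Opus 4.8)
Since we are at $t=0$ the NMs do not move, so the target at the origin is detected exactly when at least one daughter point of the TCP lands within distance $a$ of the origin, i.e.\ when $\Phi_\Tho(\ball{\origin,a})\neq 0$. Hence $\Tdp_\Tho^0 = 1-\Prob\big(\Phi_\Tho(\ball{\origin,a})=0\big)$, and the plan is to evaluate this void probability with the standard PCP machinery and then collapse the resulting double integral to the stated one-dimensional form. Structurally this parallels Theorems~\ref{Thm_actual}--\ref{meanpro}, with the random swept region $\Psi_t$ replaced by the fixed ball $\ball{\origin,a}$, and it follows the same route as the MCP computation in Appendix~\ref{Ap_statMCP}; the only change is that the Gaussian daughter kernel has unbounded support, so the inner integral becomes the Gaussian mass of $\ball{\origin,a}$ rather than a two-ball intersection volume.

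First I would condition on the parent PPP. Given a parent at $\x$, thinning the $\mathrm{Poisson}(\mbar)$-many i.i.d.\ daughters by the event ``absolute position lies in $\ball{\origin,a}$'' shows that the number of that cluster's daughters inside $\ball{\origin,a}$ is $\mathrm{Poisson}\big(\mbar\int_{\ball{\origin,a}}f(\y-\x)\,\diff\y\big)$, with $f$ the zero-mean Gaussian kernel of \eqref{Thodens}; so the per-cluster empty probability is $\exp\big(-\mbar\int_{\ball{\origin,a}}f(\y-\x)\,\diff\y\big)$. Independence across clusters together with the void probability (PGFL) of the parent PPP then gives
\begin{align}
\Tdp_\Tho^0 = 1-\exp\left(-\lambda_\Pa\int_{\rthree}\left(1-\exp\left(-\mbar\int_{\ball{\origin,a}}f(\y-\x)\,\diff\y\right)\right)\diff\x\right).\nonumber
\end{align}

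Next I would reduce the two integrals. By isotropy of the parent PPP the outer integrand depends on $\x$ only through $x=\|\x\|$, so polar coordinates contribute a factor $4\pi$ and leave $\int_0^\infty(\cdots)x^2\,\diff x$. For the inner integral I would put $\x$ on the polar axis and use spherical coordinates $\y=(y,\theta,\phi)$ with $y\in[0,a]$, so that $\|\y-\x\|^2=x^2+y^2-2xy\cos\theta$; the $\phi$-integral yields $2\pi$, and after $u=\cos\theta$ the $\theta$-integral is elementary,
\begin{align}
\int_0^\pi e^{\,xy\cos\theta/\sigma^2}\sin\theta\,\diff\theta=\frac{2\sigma^2}{xy}\sinh\!\left(\frac{xy}{\sigma^2}\right).\nonumber
\end{align}
Substituting back, the Gaussian normalizer $(2\pi\sigma^2)^{-3/2}$ combines with the factors $2\pi$, $4\pi$ and $2\sigma^2/(xy)$; collecting constants gives $4\pi\sigma^2/(2\pi\sigma^2)^{3/2}=\sqrt{2/(\pi\sigma^2)}$, the residual $y$-integral runs over $[0,a]$ with integrand $e^{-(x^2+y^2)/(2\sigma^2)}\sinh(xy/\sigma^2)\,y/x$, and carrying $\mbar$ through produces exactly \eqref{target-detection-TCP0}.

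The one genuinely error-prone point — the part I would be most careful with — is the bookkeeping: recognizing that the mean number of a cluster's daughters inside $\ball{\origin,a}$ is the full Gaussian mass $\mbar\int_{\ball{\origin,a}}f(\y-\x)\,\diff\y$ of that ball (not the small-$a$ surrogate $\mbar f(\x)\,|\ball{\origin,a}|$), and then tracking the prefactor $(2\pi\sigma^2)^{-3/2}$ through the two coordinate changes so that exactly one power of $\sigma$ survives. Once the $\theta$-integral is seen to produce the $\sinh$, the remainder is routine.
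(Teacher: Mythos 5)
Your proposal is correct and follows essentially the same route as the paper's Appendix~\ref{Ap_statTCP}: detection at $t=0$ is the void event of daughter points in $\ball{\origin,a}$, handled by the daughter-process PGFL (your thinning argument is the same thing) followed by the parent PPP PGFL, and then a spherical-coordinate reduction in which the angular integral produces the $\sinh$ term and the constants collapse to $\sqrt{2}/\sqrt{\pi\sigma^2}$. The only difference is cosmetic — you carry out explicitly the coordinate reduction that the paper leaves as "simplifying further" — and your bookkeeping of the Gaussian mass and prefactors checks out.
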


\section{Target Detection Using a Single Cluster of NMs}
In this section, we consider a single cluster network of NMs for target detection where NMs are deployed in a single cluster. A single-cluster network captures the impact of proximity between the cluster center and the target, whereas a PCP-based deployment does not, since NMs are dispersed throughout the space. Let us denote the cluster center as $\x_\Pa$, which is assumed to be within a spherical region of radius $r_\Pa$ with PDF $g(\x)$. The centers of NMs are distributed around this cluster center with PDF $f(\y)$, independent of other NMs. We consider two scenarios of deployment of NMs. The first considers the centers of the NMs to be uniformly distributed within $\ball{\x_\Pa,r}$ around the cluster center.  In the second scenario, the centers of NMs are distributed according to a Gaussian distribution of variance $\sigma^2$ around the cluster center. These two depict the single cluster version of MCP and TCP respectively.
The boolean model for this single-cluster system at time $t=0$ can be described as
\begin{align}
    \Psi_\s = \bigcup\nolimits_{\y_j \in \mathcal{N}_\s} \left( \x_\Pa + \y_j + \ball{\origin, a} \right),
\end{align}
where $\mathcal{N}_\s$ represents the set of all daughter points (depicting NMs centers) within the cluster. 

We are interested in the detection event $\Event_\s$ that the point target at the origin is detected  by any of the
NMs, \ie
\begin{align}
    \Event_\s = \bigcup\nolimits_{\y_j \in \mathcal{N}} \left( \bp_{\Pa j}(t) \oplus \ball{\origin, a} \cap \origin \neq \emptyset \right),\label{event_single}
\end{align}
where $\bp_{\Pa j}(t)$ is the Brownian path of the $j$-th NM within the cluster, starting from $\x_\Pa + \y_j$ and moving according to Brownian motion with diffusion coefficient $D$. The single-cluster model is applicable in scenarios where NMs are concentrated in a localized region. By adjusting the parameters \(\x_\Pa\), \(r\), or \(\sigma^2\), the spatial characteristics of the NMs can be changed depending on the application requirements. Recall that the mathematical descriptions of the distributions of the NM centers are as per \eqref{Matdens} and \eqref{Thodens} for uniform and Gaussian distributions, respectively.
 \begin{theorem} \label{ThmSingle_actual}
  For NMs deployed in a single cluster, the probability of detecting a point target at the origin by any of the NMs within time $t$ is  (for proof, see Appendix \ref{Ap_sin_actual}).
\begin{align}
    \Tdp^\s(t)&=1- \int_{\rthree}\exp\left(-\mbar\int_{\rthree}\frac{a}{||\x+\y||}\right. \left.\erfc\left(\frac{||\x+\y||-a}{\sqrt{4Dt}}\right)f(\y)\diff \y\right)g(\x)\diff \x, \label{eq:singleC}
\end{align}where $g(\x)$ is the spatial PDF of the cluster center and $f(\y)$ is the PDF of the center of NMs around the cluster center.
 \end{theorem}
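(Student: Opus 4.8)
The plan is to condition on the random location of the single cluster center and then exploit the conditional independence of the NMs, re-using the per-cluster computation already carried out in the proof of Theorem~\ref{Thm_actual}. First I would fix $\x_\Pa=\x$. Conditioned on this, the relative displacements $\{\y_j\}$ of the NM centers form a Poisson point process on $\rthree$ with intensity measure $\mbar f(\y)\,\diff\y$: this is exactly the doubly-PCP structure, in which the number of NMs is Poisson with mean $\mbar$ and each is placed i.i.d.\ with density $f$. Moreover, each NM subsequently performs an independent Brownian motion with coefficient $D$, independently of the placement.

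Next I would isolate the single-NM detection probability. For one NM whose center starts at $\x+\y$, the event that its swept region $\bp_{\Pa j}(t)\oplus\ball{\origin,a}$ covers the origin is the event that a sphere of radius $a$ undergoing Brownian motion with coefficient $D$, starting from initial center-to-target distance $\rho=\|\x+\y\|$, absorbs the static point target within time $t$. This first-passage probability is
\begin{align}
q(\rho,t)=\frac{a}{\rho}\,\erfc\!\left(\frac{\rho-a}{\sqrt{4Dt}}\right),\nonumber
\end{align}
the classical result already established in the proof of Theorem~\ref{Thm_actual} (Appendix~\ref{Ap_actual}), so it can be invoked directly rather than re-derived.

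Then, since the detection indicators of distinct NMs are conditionally independent given the point configuration, the NMs that detect the target form an independently thinned Poisson process; by its void probability---equivalently, by evaluating the PGFL of the daughter PPP at the function $\y\mapsto 1-q(\|\x+\y\|,t)$---the conditional probability that no NM of the cluster detects the target is
\begin{align}
\Prob\!\left(\Event_\s^{\Comp}\mid \x_\Pa=\x\right)=\exp\!\left(-\mbar\int_{\rthree}\frac{a}{\|\x+\y\|}\,\erfc\!\left(\frac{\|\x+\y\|-a}{\sqrt{4Dt}}\right)f(\y)\,\diff\y\right).\nonumber
\end{align}
Finally I would remove the conditioning by integrating against the cluster-center density $g(\x)$ and take the complement, which yields \eqref{eq:singleC}. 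Relative to Theorem~\ref{Thm_actual}, the only structural change is that the outer PGFL over the PPP of cluster centers is replaced by a single expectation with respect to $g$; the inner exponential is verbatim the per-cluster quantity appearing there.

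The step requiring the most care is the conditional-independence/thinning argument: one must check that, given $\x_\Pa$, the daughter points genuinely form a Poisson process, and that the Brownian trajectories---hence the detection events---are mutually independent and independent of the placement, so that the thinning theorem (equivalently, the PGFL identity $\Expect[\prod_j v(\y_j)]=\exp(-\int_{\rthree}\mbar f(\y)(1-v(\y))\,\diff\y)$) applies with $v(\y)=1-q(\|\x+\y\|,t)$. Everything else is a direct substitution or a citation, so this is the only genuine obstacle.
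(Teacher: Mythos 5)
Your proposal is correct and follows essentially the same route as the paper's Appendix proof: condition on the cluster center, apply the PGFL (void-probability) of the daughter Poisson process with the single-NM hitting probability $\frac{a}{\rho}\erfc\bigl(\frac{\rho-a}{\sqrt{4Dt}}\bigr)$, and then average over $g(\x)$. The paper phrases the middle step via indicator functions and the PGFL rather than your thinning language, but these are equivalent, so there is no substantive difference.
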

 Substituting the corresponding distributions as given in \eqref{Matdens} and \eqref{Thodens} into the above equation and by the rotational invariance property of the cluster center, we get the following corollaries.
 \begin{corollary}\label{Uniformcoro}
 For a network of a single cluster NMs with
 the cluster center uniformly distributed in
$\ball{\origin,R}$ and NMs uniformly distributed in $\ball{\x_\Pa,r}$ around the cluster center, the target detection probability within time $t$ is given as 
\begin{align}
    &\Tdp_\Mat^\s (t)=
1-\frac{3}{R^3} \int_{0}^R\exp\left(-\frac{3\mbar}{2 r^3}\int_{0}^r\int_0^\pi\frac{a}{\gamma(x,y,\theta)}\right. \left.\erfc\left(\frac{\gamma(x,y,\theta)-a}{\sqrt{4Dt}}\right)y^2\sin(\theta)\diff \theta\diff y\right)x^2\diff x.\label{Uniformtarget}
\end{align} 
\end{corollary}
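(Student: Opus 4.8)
The plan is to obtain Corollary~\ref{Uniformcoro} as a direct specialization of Theorem~\ref{ThmSingle_actual}, reducing the two integrals over $\rthree$ in \eqref{eq:singleC} to one-dimensional radial integrals via the isotropy of both densities, exactly as was done in the proof of Corollary~\ref{MCPcoro}.

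First I would substitute into \eqref{eq:singleC} the uniform daughter density $f(\y)=\frac{1}{(4/3)\pi r^3}\,\Ind(\y\in\ball{\origin,r})$ from \eqref{Matdens}, together with the uniform cluster-center density $g(\x)=\frac{1}{(4/3)\pi R^3}\,\Ind(\x\in\ball{\origin,R})$. This confines the inner integral to $\y\in\ball{\origin,r}$ and the outer integral to $\x\in\ball{\origin,R}$, with the two normalizing constants pulled outside.

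Next, for the inner integral I would fix $\x$ and use that the integration domain $\ball{\origin,r}$ is rotationally symmetric about the origin: choosing spherical coordinates $\y=(y,\theta,\phi)$ with polar axis along $\x$, one has $\|\x+\y\|=\sqrt{x^2+y^2+2xy\cos\theta}=\gamma(x,y,\theta)$ with $x=\|\x\|$, the integrand is $\phi$-independent, and $\int_0^{2\pi}\diff\phi=2\pi$. With $\diff\y=y^2\sin\theta\,\diff y\,\diff\theta\,\diff\phi$, the prefactor $\mbar/((4/3)\pi r^3)$ combined with the $2\pi$ becomes $3\mbar/(2r^3)$, producing the inner double integral appearing in \eqref{Uniformtarget}. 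Since the resulting exponential depends on $\x$ only through $x=\|\x\|$, the outer integral over $\ball{\origin,R}$ in spherical coordinates integrates out its angular variables to give a factor $4\pi$, and $4\pi/((4/3)\pi R^3)=3/R^3$, leaving $\int_0^R(\cdot)\,x^2\,\diff x$. Collecting the pieces yields \eqref{Uniformtarget}.

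The computation is essentially bookkeeping, and the only step meriting care is the twofold use of rotational invariance: aligning the $\y$-polar axis with $\x$ inside the inner integral, and then recognizing that the inner integral is a function of $\|\x\|$ alone so that the outer angular integration contributes a clean $4\pi$. Both are justified because each density is isotropic about its own reference point ($\origin$ for the cluster center, $\x_\Pa$ for the NM offsets); once that is observed, no further obstacle arises.
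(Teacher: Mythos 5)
Your proposal is correct and follows the same route as the paper: the paper also obtains Corollary~\ref{Uniformcoro} by substituting \eqref{Matdens} and the uniform cluster-center PDF into \eqref{eq:singleC} of Theorem~\ref{ThmSingle_actual} and invoking rotational invariance, exactly as in the proof of Corollary~\ref{MCPcoro}. Your bookkeeping of the angular factors ($2\pi$ giving $3\mbar/(2r^3)$ inside and $4\pi$ giving $3/R^3$ outside) matches \eqref{Uniformtarget}.
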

 
 \begin{corollary}\label{Gausscoro}
  For a network of a single cluster NMs with
 the cluster center uniformly distributed in
$\ball{\origin,R}$ and NMs Gaussian distributed with variance $\sigma^2$, the target detection probability within time $t$ is given as 
     \begin{align}
&\Tdp_\Tho ^\s(t)=
1-\frac{3}{R^3}  \int_{0}^R\exp\left(-\frac{\mbar}{\sqrt{2\pi}\sigma^3}\int_{0}^\infty\int_0^\pi\frac{a}{\gamma(x,y,\theta)}\right.\left.\erfc\left(\frac{\gamma(x,y,\theta)-a}{\sqrt{4Dt}}\right)e^{-\frac{y^2}{2\sigma^2}}y^2\sin(\theta)\diff \theta\diff y\right) x^2\diff x.\label{Gausstarget}
\end{align} 
 \end{corollary}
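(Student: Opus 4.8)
The plan is to specialize the general single-cluster detection probability of Theorem~\ref{ThmSingle_actual}, equation~\eqref{eq:singleC}, and then reduce the two $\mathbb{R}^3$ integrals to radial integrals by rotational symmetry, following the same steps as the proof of Corollary~\ref{MCPcoro}. Concretely, I would substitute into \eqref{eq:singleC} the cluster-center density $g(\x)=\frac{3}{4\pi R^3}\mathds{1}(\x\in\ball{\origin,R})$ (uniform on the ball of radius $R$) together with the zero-mean Gaussian NM-center density $f(\y)=\frac{1}{(2\pi\sigma^2)^{3/2}}\exp(-\|\y\|^2/(2\sigma^2))$ from \eqref{Thodens}. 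This yields the intermediate form
\begin{align}
\Tdp_\Tho^\s(t)=1-\frac{3}{4\pi R^3}\int_{\ball{\origin,R}}\exp\left(-\frac{\mbar}{(2\pi\sigma^2)^{3/2}}\int_{\rthree}\frac{a}{\|\x+\y\|}\erfc\left(\frac{\|\x+\y\|-a}{\sqrt{4Dt}}\right)e^{-\frac{\|\y\|^2}{2\sigma^2}}\diff\y\right)\diff\x,\nonumber
\end{align}
which serves as the starting point for the coordinate reductions.

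For the inner integral over $\y$, I would note that the Gaussian weight is isotropic and the $\erfc$ factor depends on $\y$ only through $\|\x+\y\|$, so the inner integral depends on $\x$ only through $x=\|\x\|$; by rotational invariance I may fix $\x=(0,0,x)$ on the positive $z$-axis. Passing to spherical coordinates $\y=(y,\theta,\phi)$ with $y\in[0,\infty)$, $\theta\in[0,\pi]$, $\phi\in[0,2\pi)$, one has $\|\x+\y\|=\gamma(x,y,\theta)$; the integrand is independent of $\phi$, so the $\phi$-integration contributes a factor $2\pi$, the volume element supplies $y^2\sin\theta$, and the prefactor collapses to $\frac{2\pi}{(2\pi\sigma^2)^{3/2}}=\frac{1}{\sqrt{2\pi}\,\sigma^3}$, which reproduces exactly the inner double integral in \eqref{Gausstarget}. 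For the outer integral, since the inner expression now depends only on $x=\|\x\|$, switching to spherical coordinates on $\ball{\origin,R}$ and integrating out the angular variables gives a factor $4\pi$, so $\frac{3}{4\pi R^3}\int_{\ball{\origin,R}}(\cdots)\diff\x=\frac{3}{R^3}\int_0^R(\cdots)x^2\diff x$, completing the derivation of \eqref{Gausstarget}.

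The computation is routine, and I do not expect a genuine obstacle: the argument is structurally identical to that of Corollaries~\ref{MCPcoro} and \ref{Uniformcoro}, with the uniform NM-center density replaced by the Gaussian one. The only points that deserve a word of justification are the application of Fubini's theorem to interchange the $\x$- and $\y$-integrations (legitimate because the $\erfc$ and Gaussian factors are nonnegative and bounded) and the rotational-invariance reduction, which is valid here precisely because the outer ball $\ball{\origin,R}$ is centered at the origin and the inner Gaussian is isotropic; the ``hardest'' part is merely keeping track of the normalization constants correctly.
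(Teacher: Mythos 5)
Your proposal is correct and follows essentially the same route as the paper: the paper obtains Corollary~\ref{Gausscoro} by substituting $g(\x)$ uniform on $\ball{\origin,R}$ and the Gaussian $f(\y)$ from \eqref{Thodens} into \eqref{eq:singleC} and invoking rotational invariance, exactly as you do, and your bookkeeping of the constants ($2\pi/(2\pi\sigma^2)^{3/2}=1/(\sqrt{2\pi}\sigma^3)$ and $4\pi\cdot\tfrac{3}{4\pi R^3}=\tfrac{3}{R^3}$) is right. The only cosmetic remark is that no Fubini interchange of the $\x$- and $\y$-integrations is actually needed (the $\y$-integral sits inside the exponential and stays nested), so that justification is superfluous rather than wrong.
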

Using a proof similar to that of Theorem \ref{ThApprox}, we can derive the approximate target detection probability for a single cluster scenario, where the cluster center is uniformly distributed, as presented in the following corollary.
\begin{corollary}\label{ThApproxSin}
The target detection probability of a single cluster of NMs when the cluster center is uniformly distributed in $\ball{\origin,R}$ and NMs are initially concentrated at the cluster center can be approximated as
\begin{align}
    \Tdp_\s(t)&\approx
1- \frac{3}{R^3}\int_{0}^R\exp\left(- \frac{\mbar a}{r}  \erfc\left(\frac{r-a}{\sqrt{4Dt}}\right)\right)\ r^2\diff r.
\end{align}
\end{corollary}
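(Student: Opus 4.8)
\emph{Proof proposal.} The plan is to specialize the argument used for Theorem \ref{ThApprox}, but starting from the exact single-cluster formula of Theorem \ref{ThmSingle_actual} instead of Theorem \ref{Thm_actual}. I would begin with \eqref{eq:singleC} and impose the same modeling approximation that underlies Theorem \ref{ThApprox}: the NMs are taken to be initially co-located with the cluster centre, so the offset PDF $f(\y)$ degenerates to a Dirac delta $\delta(\y)$ (\ie $r\to 0$ in the uniform case, $\sigma\to 0$ in the Gaussian case). Under this substitution $\|\x+\y\|\approx\|\x\|$, the inner integral over $\y\in\rthree$ collapses and \eqref{eq:singleC} reduces to
\begin{align}
    \Tdp^\s(t)\;\approx\; 1-\int_{\rthree}\exp\!\left(-\mbar\,\frac{a}{\|\x\|}\,\erfc\!\left(\frac{\|\x\|-a}{\sqrt{4Dt}}\right)\right)g(\x)\,\diff\x. \nonumber
\end{align}

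Next I would substitute the cluster-centre PDF relevant here, namely the uniform density on $\ball{\origin,R}$, $g(\x)=\frac{3}{4\pi R^{3}}\,\Ind\!\left(\x\in\ball{\origin,R}\right)$. Since the surviving integrand depends on $\x$ only through $\|\x\|$, passing to spherical coordinates integrates out the angular variables into a factor $4\pi$; this cancels the $4\pi$ in the normalization of $g$, leaving the prefactor $3/R^{3}$ and a radial integral over $[0,R]$. Relabelling the radial variable (the distance of the cluster centre from the target) as $r$ produces the claimed expression.

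Because this is just a restriction of the proof of Theorem \ref{ThApprox} to the single-cluster density, there is no real analytic obstacle. The only points that need care are bookkeeping ones: making explicit that the $\delta$-function limit is exactly the approximation already justified for Theorem \ref{ThApprox} (applied to the NM-cloud spread, not to the cluster-centre support $R$), and keeping the two radial length scales distinct --- the NM spread being sent to zero versus the retained support $R$ --- even though the final integration variable reuses the symbol $r$. If desired, one can close with a one-line remark, as after Theorem \ref{ThApprox}, that the approximation is accurate when the NM spread is small relative to both $R$ and the diffusion length $\sqrt{4Dt}$.
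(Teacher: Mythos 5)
Your proposal is correct and matches the paper's intended derivation: the paper itself states only that the corollary follows "using a proof similar to that of Theorem \ref{ThApprox}," i.e., applying the $f(\y)\to\delta(\y)$ collapse to the exact single-cluster expression \eqref{eq:singleC} and then integrating the uniform cluster-centre density $g(\x)$ in spherical coordinates, exactly as you do. Your side remarks on keeping the vanishing NM spread distinct from the retained support $R$ (despite the reused symbol $r$) are sound bookkeeping and do not alter the argument.
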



\section{Extension: Detection of a Spherical Target}\label{Sec:Extension}
Up until now, we have assumed that the target is a zero-dimensional point in space and have not taken its size into account. Let us consider the target as a spherical shape with a radius $a_\target$. The detection events of the spherical target by NMs in the cases of PCP and single cluster deployments are respectively given by
 \begin{align}
     \Event^\target &= \bigcup\nolimits_{\x_i \in \Phi_\Pa} \bigcup\nolimits_{\y_{ij} \in \mathcal{N}_i} \left( \bp_{ij}(t)\oplus \ball{\origin, a} \cap \ball{\origin, a_\target} \neq \emptyset \right)\nonumber\\
     &=\bigcup\nolimits_{\x_i \in \Phi_\Pa} \bigcup\nolimits_{\y_{ij} \in \mathcal{N}_i} \left( \bp_{ij}(t)\oplus \ball{\origin, a+a_\target} \cap \origin \neq \emptyset \right),\nonumber\\
    &\text{and }\Event^\target_\s = \bigcup\nolimits_{\y_j \in \mathcal{N}} \left( \bp_{\Pa j}(t) \oplus \ball{\origin, a+a_\target} \cap \origin \neq \emptyset \right).
\end{align}Therefore, when the target is modeled as a spherical object with radius $a_\target$, the event of detection of the target of radius $a_\target$ by NMs of radius $a$ is equivalent to the event of detection of a point target by NMs of radius $a+a_\target$. The detection probability expressions, bounds, and approximations derived for a system with a zero-dimensional point target remain valid with $a$ replaced with $a+a_\target$. 
\section{Numerical Results}\label{Sec:NumResults}
We validate our derived results using particle-based simulations and analyze the detection system's performance under different parameter variations. The particle simulations are designed to model the behavior of NMs in the 3D medium, with their movement governed by Brownian motion. The clustering processes describe the initial spatial distribution of the NMs at time $t=0$. The parent points themselves are distributed according to a homogeneous PPP in a bounded spherical region with a radius of $250\, \mu$m. The movement of NMs begins at time $t=0$ s, and follows a Brownian motion with a diffusion coefficient $D=100\,\mu$m$^2$/s. The particle simulation uses a time step of $\Delta t=0.001$ s. The target is considered to be detected if the spherical surface of any NM intersects with the target located at the origin. The detection probability is calculated by determining whether any NM is within a distance of $a$ from the origin during the simulation time. The results of these particle simulations are averaged over $10^4$  independent realizations to ensure statistical reliability.

\begin{figure}[ht]
	\centering
	\begin{minipage}{0.48\linewidth}
		\centering
		\includegraphics[width=\linewidth]{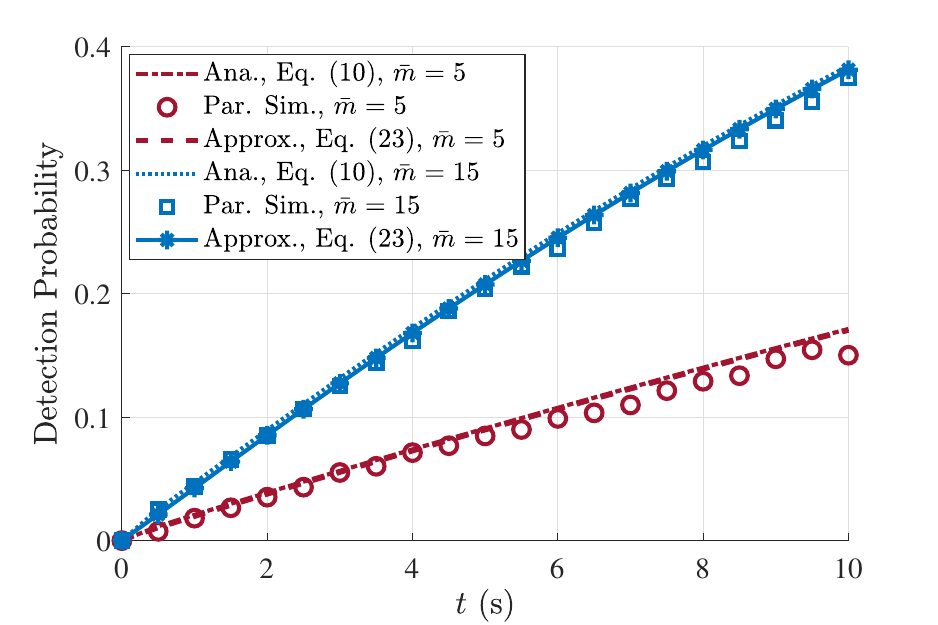}
		(a)
	\end{minipage}\hfill
	\begin{minipage}{0.48\linewidth}
		\centering
		\includegraphics[width=\linewidth]{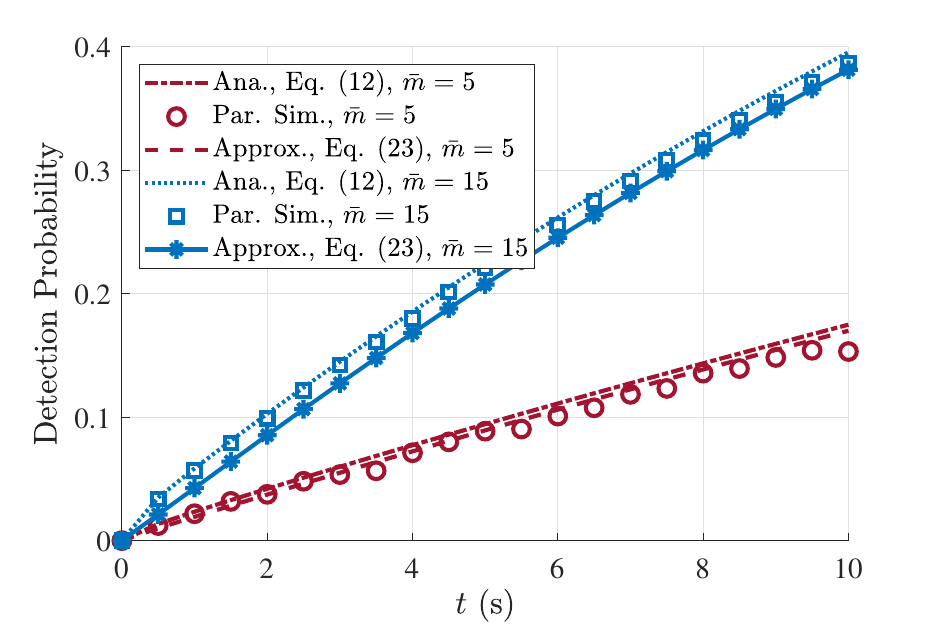}
		(b)
	\end{minipage}
	\caption{Variation of the detection probability over time $t$ for different values of $\mbar$ in (a) MCP and (b) TCP. The probability of detection over time of PCP deployed NMs shows a higher detection probability when $t$ and $\mbar$ are increased. The parameter values are $a=3\,\mu$m, $r=10\,\mu$m, $\sigma=10\,\mu$m, $\lambda_{\mathrm{Pa}}=1\times10^{-6}$ clusters/$\mu$m$^{3}$, $D=100\,\mu$m$^{2}$/s, and $\Delta t=10^{-3}$\,s.}
	\label{MCPTCPmbar}
\end{figure}

Fig. \ref{MCPTCPmbar} shows the variation in the probability of detection within time $t$ for a) MCP and b) TCP considering different values of $\mbar$. The derived equations are in good agreement with the particle-based simulation results, confirming the accuracy of the derived results. With an increase in $t$, the probability of target detection in time $t$ increases. With an increase in $\mbar$, the average number of NMs in a cluster increases, thus increasing the probability of detection of the target. Furthermore, the approximate equation for the detection probability for MCP is closer to the actual results compared to that for TCP. 
This can be attributed to the presence of distant NMs in TCP, due to the Gaussian spread, weakening the approximation $\|\x+\y\|\approx \|\x\|$ more compared to the strictly bounded clusters in MCP.

\begin{figure}[ht]
	\centering
	\begin{minipage}{0.48\linewidth}
		\centering
		\includegraphics[width=\linewidth]{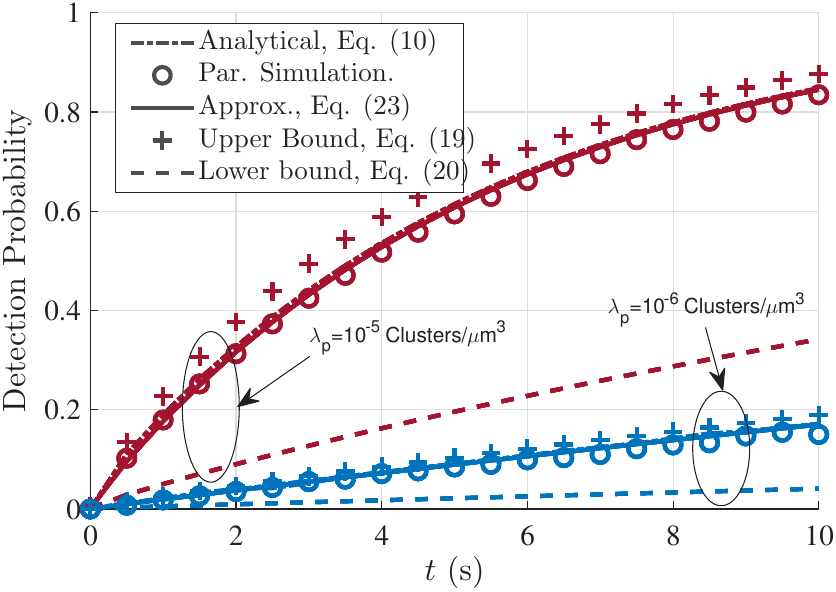}
		(a)
	\end{minipage}\hfill
	\begin{minipage}{0.48\linewidth}
		\centering
		\includegraphics[width=\linewidth]{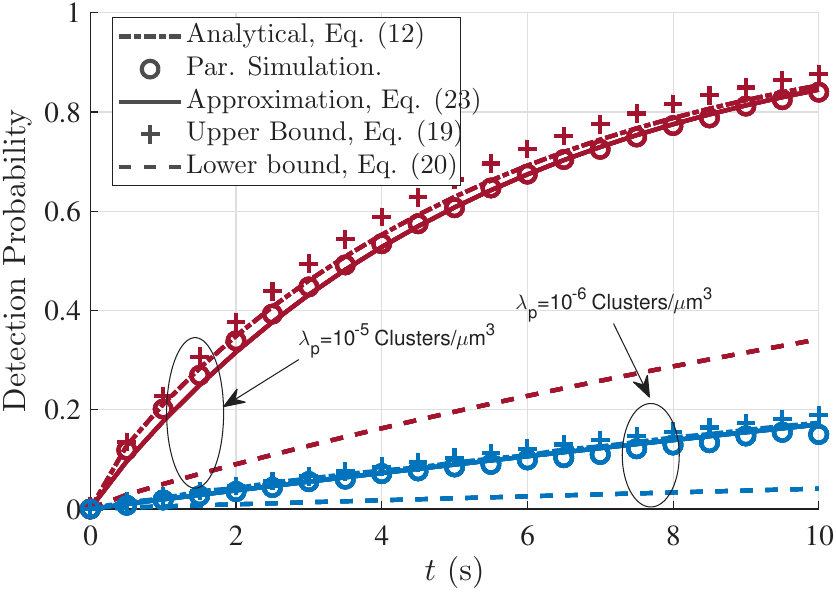}
		(b)
	\end{minipage}
	\caption{Variation of the detection probability over time $t$ for (a) MCP and (b) TCP. The probability of detection increases with cluster density. The parameter values are $a=3\,\mu$m, $r=10\,\mu$m, $\sigma=10\,\mu$m, $\mbar=5$, $D=100\,\mu$m$^{2}$/s, and $\Delta t=10^{-3}$\,s.}
	\label{MCPTCPlambda}
\end{figure}

Fig. \ref{MCPTCPlambda}  illustrates the variation in the probability of detection over time $t$ for (a) MCP and (b) TCP at different values of $\lambda_\Pa$. With an increase in $\lambda_\Pa$, the number of clusters per unit volume increases, thus increasing the number of NMs per unit volume and thereby the detection probability. The approximate equation for the detection probability is good, even if the cluster density is increased for the chosen parameters. Moreover, the numerical results of the exact equations are tightly bounded by the upper and lower bounds derived, ensuring their validity. In particular, the upper bound is significantly tighter than the lower bound. 

\begin{figure}[ht]
	\centering
	\begin{minipage}{0.48\linewidth}
		\centering
		\includegraphics[width=\linewidth]{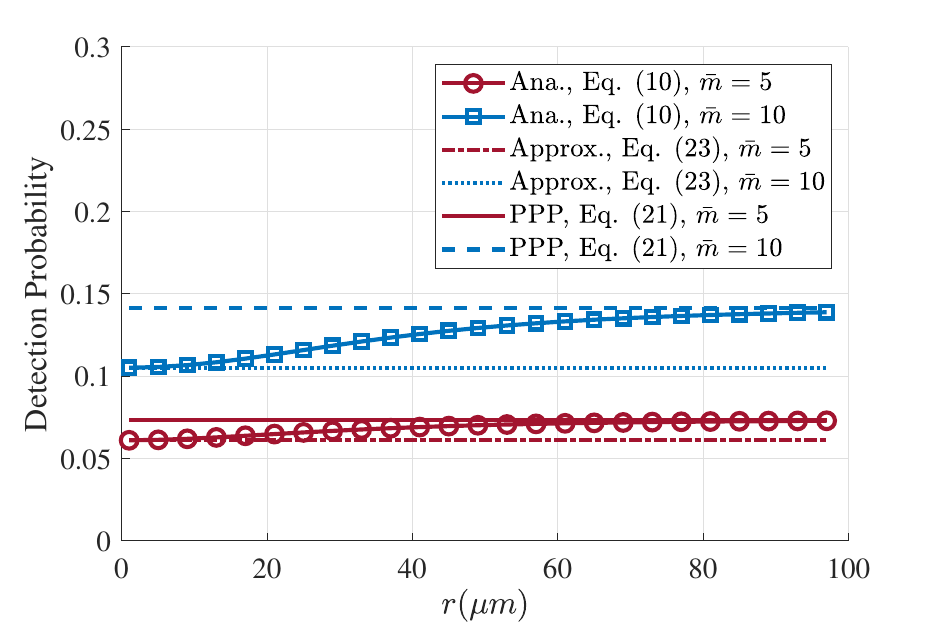}
		(a)
	\end{minipage}\hfill
	\begin{minipage}{0.48\linewidth}
		\centering
		\includegraphics[width=\linewidth]{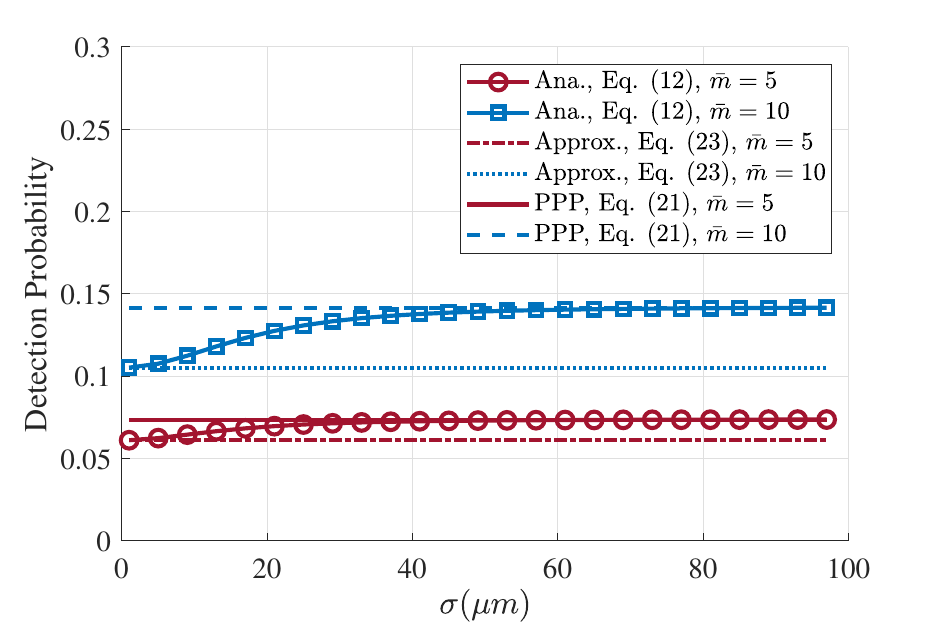}
		(b)
	\end{minipage}
	\caption{Variation of the detection probability over time $t$ versus (a) the cluster radius $r$ for MCP and (b) the spread parameter $\sigma$ for TCP. Increasing $r$ in MCP and $\sigma$ in TCP enhances the detection probability; however, the accuracy of the approximate expression degrades as NMs are distributed farther from the cluster centers. The parameter values are $a=4\,\mu$m, $\lambda_{\mathrm{Pa}}=5\times10^{-7}$ clusters/$\mu$m$^{3}$, $D=100\,\mu$m$^{2}$/s, $t=5$\,s, and $\Delta t=10^{-3}$\,s.}
	\label{MCPTCPrs}
\end{figure}

Fig. \ref{MCPTCPrs} shows the variation in the probability of detection with a) the cluster radius $r$ for MCP and b) the value $\sigma$ for TCP, which represents the spread of the spatial locations of the NMs. As the value of $r$ and $\sigma$ increase, the probability of detection increases. This is because increasing $r$ and $\sigma$ causes the initial distribution of NMs to spread more around the center of the cluster, thereby increasing the probability of detecting the target. As the value of $r$ and $\sigma$ increase, the actual detection probability and the approximate detection probability varies because $ \|\x_i + \y_{ij}\| \approx \|\mathbf{x}_i\| $ will no longer be satisfied. Furthermore, with increasing $r$ and $\sigma$, the detection performance of PCP approaches that of PPP. As $r$ and $\sigma$ become very large, PCP loses its clustering property and becomes uniformly distributed like PPP.
\begin{figure}[ht]
    \centering
    \includegraphics[width=0.45\linewidth]{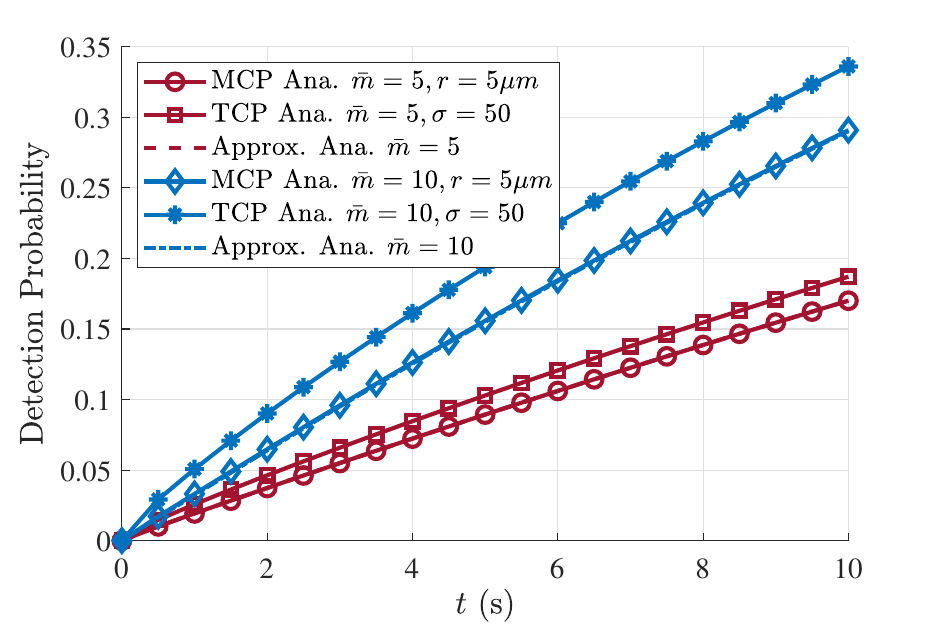}
    \caption{A comparison of target detection probabilities for MCP  and TCP distributed NMs shows that both methods have similar detection rates when NMs are near cluster centers. However, TCP performs better at higher values of $\sigma$. The parameters are \( a = 3 \, \mu m \), \( \lambda_\Pa = 1 \times 10^{-6} \, \text{Clusters}/\mu m^{3} \), and \( D = 100 \, \mu m^2/s \).}
    \label{fig:TCP_MCPComp}
\end{figure}

Fig. \ref{fig:TCP_MCPComp} shows the comparison of the target detection probability of the network of NM deployed as MCP and TCP. Although Figs. \ref{MCPTCPmbar}, \ref{MCPTCPlambda}, and \ref{MCPTCPrs} suggest that NMs deployed by MCP and TCP have similar detection probabilities, this is not always the case. Their similar performance here is due to the chosen parameters, which initially position the NMs close to the cluster center. However, Fig. \ref{fig:TCP_MCPComp} shows that TCP can perform better when the value $\sigma$ is higher keeping $\mbar$ the same. Similarly, we can verify that MCP can also perform better for large values of cluster radius.
 
\begin{figure}
    \centering
    \includegraphics[width=0.45\linewidth]{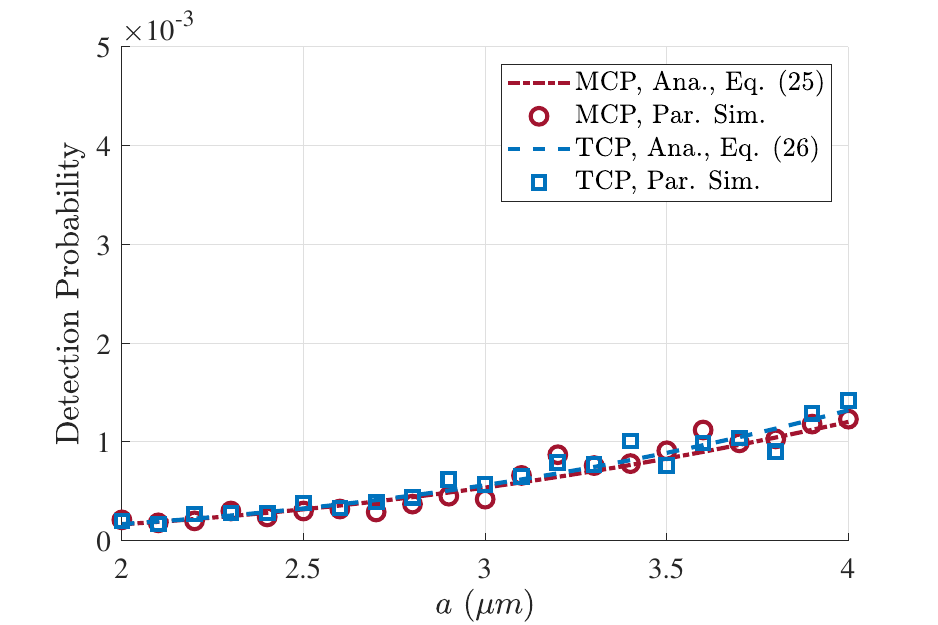}
     \caption{Target detection probability at time $t=0$ versus NM radius for MCP and TCP deployed NMs. Increasing NM size enhances detection probability due to a larger surface area and closer proximity to the target.  The parameter values are $r=10\, \mu$m, $ \sigma=10\,\mu$m, $a= 3\, \mu$m, $ \lambda_{\Pa}=1\times 10^{-6}\text{ Clusters}/\mu$m$^3$,  and   $D= 100\, \mu $m$^2/s$.}
    \label{fig:Stat}
\end{figure}
{\color{black} Fig. \ref{fig:Stat} shows the detection probability at $t=0$ with respect to receiver radius $a$ for NMs distributed as MCP and PCP, given by \eqref{target-detection-MCP0} and  \eqref{target-detection-TCP0}, respectively. Note that $t=0$ refers to the time when NMs are at their initial location as deployed and they have not yet moved. We can first observe that the analytical curves closely match particle-based simulations, validating the derived expressions. Because $r$ and $\sigma$ have comparable values, the detection performance of both deployments is very similar. The detection performance of TCP is slightly better compared to that of MCP due to its higher spatial spread. Furthermore, as the NM size $a$ increases, the detection probability increases due to the increase in the volume swept by the NMs of a cluster.}

{\color{black}Till now, we analyzed the detection performance in PCP deployments with clusters spread everywhere in an unbounded environment. We now focus on the single cluster scenario, where a single cluster of a finite number of NMs is deployed around a cluster center. The NMs are assumed to be located either uniformly within a spherical region of radius $R$  or following a Gaussian distribution with variance $\sigma^2$. }
\begin{figure}[ht]
    \centering
    \includegraphics[width=0.45\linewidth]{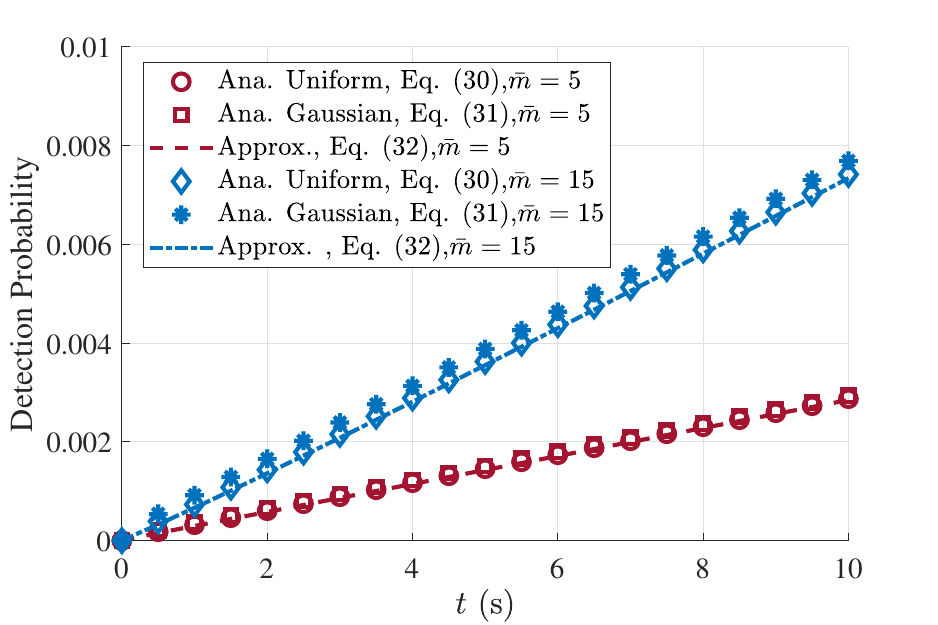}
    \caption{Detection probability  over time $ t $ of single cluster scenario with $ \mbar = 5 $  and $ \mbar = 15 $. Detection probability increases as the mean number of NMs per cluster rises.  The parameter values are $r=10\ \mu$m$,\ \sigma=10\,\mu$m, $ a= 3\ \mu \text{m}, \text{ and } D= 100\,\mu \text{m}^2/s$.}
    \label{fig:SinCluster}
\end{figure}

Fig. \ref{fig:SinCluster} illustrates the variation in detection probability over time $t$ for the single cluster scenario. The center of the cluster is uniformly distributed in a spherical region of radius $R=200\,\mu $m. The detection probabilities of uniformly distributed and Gaussian distributed NMs in the cluster are nearly identical for the given parameters. As the mean number of NMs per cluster increases, the probability of detection concurrently increases. 

\begin{figure}[ht]
    \centering
    \includegraphics[width=0.45\linewidth]{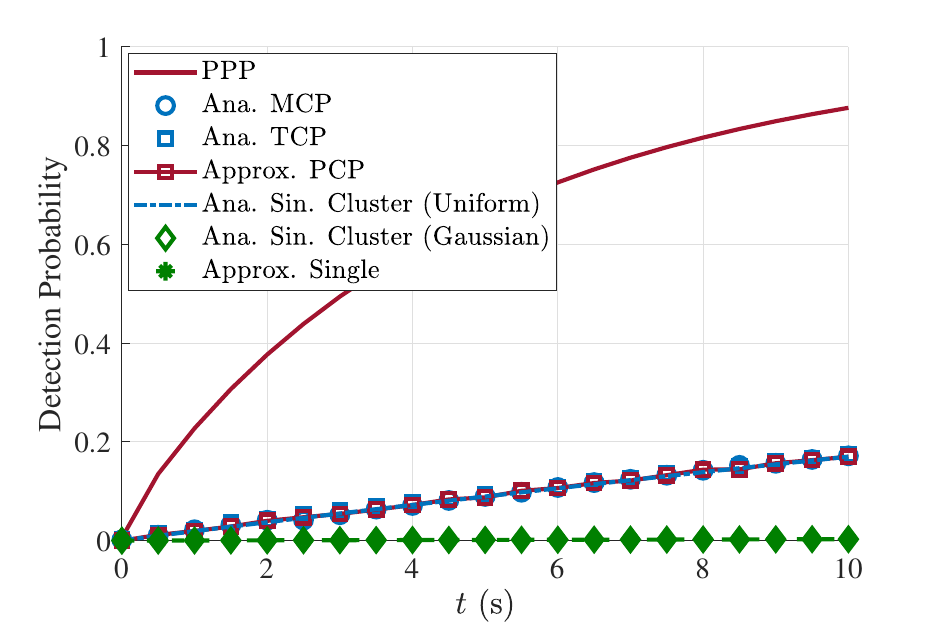}
    \caption{Comparison of detection probabilities of PPP, PCP and single cluster network of NMs. The parameter values are  $\mbar=5,\ \lambda_\Pa=1\times 10^{-5}\text{Clusters/}\mu \text{m}^3,\  r=10\ \mu \text{m}, \, \sigma=10\,\mu$m, $ a= 3\,\mu \text{m}, \text{ and } D= 100\,\mu \text{m}^2/s$.}
    \label{fig:comparison}
\end{figure}
Fig.~\ref{fig:comparison} shows the comparison of NM networks with different spatial arrangements like PPP, PCP (MCP and TCP), and single cluster scenario. The PCP performs significantly better than the single-cluster scenario in terms of detection probability because of the multiple clusters. However, PPP with effective density $\lambda_\Pa\mbar$ outperforms PCP because uniformly distributed NMs in PPP provide greater coverage across the medium, reducing the likelihood of detection gaps. This comparison highlights the trade-off between localized clustering and uniform spatial distribution in the design of NM networks for efficient target detection.

{\color{black}The performance gap between the PCP and PPP deployments, as seen in Fig.~\ref{fig:comparison}, aligns with the analysis in Section~\ref{PerfComp}. One way to reduce the performance loss is by using molecules with a higher diffusion coefficient $D$.  A higher $D$ enhances the mobility of NMs, allowing them to explore a larger volume over time $t$. This increased diffusion reduces the impact of clustering by enabling NMs to spread more uniformly, effectively transitioning the PCP's spatial distribution toward a PPP-like coverage as $t$ increases, thereby improving detection probability and narrowing the performance gap.	
	
		The numerical results provide insights into improving detection probability in PCP deployments by adjusting key parameters, as illustrated in Figs.~\ref{MCPTCPmbar}, \ref{MCPTCPlambda}, and \ref {fig:TCP_MCPComp}. For instance, increasing \(\mbar\) (more NMs per cluster, as in Fig.~\ref{MCPTCPmbar}) or \(\lambda_\Pa\) (more clusters, as shown in Fig.~\ref{MCPTCPlambda})  enhances detection probability by increasing the overall NM density. Similarly, a larger cluster radius \(r\) or standard deviation \(\sigma\) (Fig.~\ref{fig:TCP_MCPComp}) allows NMs to spread over a wider area, improving coverage. 	}

{\color{black}We can observe that there are significant differences in detection probability magnitudes across presented results, which is due to differences in system configurations and parameter regimes used in each case. 	Figures~\ref{MCPTCPmbar}, \ref{MCPTCPlambda}, \ref{MCPTCPrs}, \ref{fig:TCP_MCPComp}, and \ref{fig:Stat} correspond to large-scale PCP systems with multiple clusters, which naturally yield higher detection probabilities due to the larger number of NMs. 	In contrast, Fig.~\ref{fig:SinCluster} represents a single-cluster deployment, where the limited number of NMs and bounded spatial extent lead to much smaller detection probabilities. 	Furthermore, Fig.~\ref{fig:Stat} shows the detection probability at time $t=0$, thus giving lower values compared to time-evolving results.}

Although a PPP-based deployment offers better detection probability, practical constraints could force clustered deployments.  Diffusion over time can cause PCP distributions to resemble PPP, although this transition causes detection delays.  Therefore, adjusting cluster characteristics or using diffusion-enhancing methods can help minimize performance loss when clustered deployments are inevitable, ensuring more effective target detection in NM networks.

\section{Conclusion}\label{Conclusions}
We have presented an analytical framework for evaluating the
target detection probability of a diffusion-assisted molecular communication-based network with NMs with clustered initial deployment. We then analyzed the target detection probability for these molecules and compared it with NMs that do not exhibit clustering. We derived analytical expressions and bounds for the target detection probability by considering the initial deployment of NMs using PCP (MCP and TCP). The derived equations were validated using extensive particle-based simulations. We also investigated a single-cluster situation in which NMs were either uniformly or Gaussian-distributed inside the cluster.

Moreover, we investigated PCP-based deployments against PPP of equivalent NM density and showed that clustered deployment can result in reduced performance. {\color{black}This performance loss is due to the fact that the  NMs are concentrated around cluster centers, leading to coverage gaps in regions far from these centers. To improve the detection performance, parameters such as the diffusion coefficient \(D\), cluster density \(\lambda_\Pa\), or cluster spread (\(r\) for MCP, \(\sigma\) for TCP) can be adjusted. For instance, a higher \(D\) enables NMs to cover larger areas via Brownian motion, reducing gaps over time, while increasing \(\mbar\) or \(\lambda_\Pa\)  boosts the effective NM density. Larger \(r\) or \(\sigma\) spreads NMs, approaching PPP-like performance but diminishing clustering advantages.} {\color{black}These results underscore the need for deployment techniques to optimize molecular communication systems, particularly for applications such as targeted drug delivery to cancer cells, early detection of disease biomarkers, and monitoring of chemical pollutants in biological or environmental systems.}

\appendices
\section{}\label{Ap_actual}
The event $\Event$ in \eqref{event1} can also be written as,
 \begin{align}
     \Event = \left(\bigcap\nolimits_{i \in \mathbb{N}} \bigcap\nolimits_{j \in \mathbb{N}} \left( \bp_{ij}(t)\oplus \ball{\origin, a} \cap \origin = \emptyset \right)\right)^\Comp.\label{event2}
 \end{align}Now, let’s define an indicator function  for this event as
\begin{align}
     &\Ind_\Event=1-\prod\nolimits_{\x_i\in\Phi_\Pa}\prod\nolimits_{\y_{ij}\in\mathcal{N}_i}\mathds{1}\left( \bp_{ij}(t)\oplus \ball{\origin, a} \cap \origin = \emptyset \right).\nonumber
 \end{align}
 The probability of the occurrence of event $\Event$ is given by
\begin{align}
\Prob\left[\Ind_\Event\right]&=1-\Expect_{\x,\y,\bp}\left[\prod\nolimits_{\x_i\in\Phi_\Pa}\prod\nolimits_{\y_{ij}\in\mathcal{N}_i}\right.\nonumber\\
& \left.\mathds{1}\left( \bp_{ij}(t)\oplus \ball{\origin, a} \cap \origin = \emptyset \right)\right]\nonumber\\
&\stackrel{(a)}=1-\Expect_{\x}\left[\prod\nolimits_{\x_i\in\Phi_\Pa}G(\x_i)\right]\nonumber\\
&\stackrel{(b)}=1-\exp\left(-\lambda_\Pa \int_{\rthree}\left(1-G(\x)\right)\diff \x\right),\label{eqnbase}
\end{align}where step \((a)\) is obtained by replacing 
\begin{align}    &G(\x_i)=\Expect_{\y,\bps}\left[\prod\nolimits_{\y_{ij}\in\mathcal{N}_i}\right.\left.\vphantom{\frac{}{}}\mathds{1}\left( \left( \x_i + \y_{ij} + \bps_{ij}(t) \right) \oplus \ball{\origin, a} \cap \origin = \emptyset \right)\right],\label{eqn:gx1}
\end{align}and step \((b)\) is obtained using the PGFL \cite{haenggi2012} of PPP. Simplifying $G(\x)$  gives,
\begin{align}
    &G(\x)\stackrel{(a)}=\exp\left(-\mbar\int_{\rthree}\left(1-\right.\right.\left.\Expect\left[\mathds{1}\left( \left( \x + \y + \bps(t) \right) \oplus \ball{\origin, a} \cap \origin = \emptyset \right)\right]\right)f(\y)\diff \y\bigg)\nonumber\\
    &\stackrel{(b)}=\exp\left(-\mbar\int_{\rthree}\right.\Prob\left[\left( \x + \y + \bps(t) \right) \oplus \ball{\origin, a} \cap \origin \neq \emptyset \right]f(\y)\diff \y\bigg)\nonumber\\
    &\stackrel{(c)}=\exp\left(-\mbar\int_{\rthree}\Prob\left[\bps(t)\cap \ball{\x+\y,a} \neq\phi\right]f(\y)\diff \y\right)\nonumber\\
   &\stackrel{(d)}= \exp\left(-\mbar\int_{\rthree}\frac{a}{||\x+\y||}\erfc\left(\frac{||\x+\y||-a}{\sqrt{4Dt}}\right)f(\y)\diff \y\right),\nonumber
\end{align}where step $(a)$ is derived using the PGFL of daughter PP, step $(b)$ is obtained by simplifying the step $(a)$, step $(c)$ is due to definition of the Minkowski's sum and finally step $(d)$ is obtained from \cite{yilmaz2014b}, 
\begin{align}
    \Prob\left[\bps(t)\cap \ball{\x,a} \neq\phi\right]=\frac{a}{||\x||}\erfc\left(\frac{||\x||-a}{\sqrt{4Dt}}\right).\label{eq:hp3d}
\end{align}
Therefore, the probability of the event \(\Event\) is 
\begin{align}
&\Prob\left[\Ind_\Event\right]=
1-\exp\left(-\lambda_\Pa \int_{\rthree}\left(1-\exp\left(-\mbar\int_{\rthree}\right.\right.\right. \left.\left.\frac{a}{||\x+\y||}\erfc\left(\frac{||\x+\y||-a}{\sqrt{4Dt}}\right)f(\y)\diff \y\bigg)\right)\diff \x\right).\label{hitproap}
\end{align}The detection probability of a target molecule by MCP and TCP distributed NMs can be determined by substituting \eqref{Matdens} and \eqref{Thodens} into \eqref{hitproap}.
\section{}\label{Ap_Thmean}
The clusters that detect the target are given as $\{\x_i: \text{there exist a } \y_{ij}\in \Psi_t \text{ with }  \Event_{ij} \}$, which can be seen as independent thinning of $\Phi_\Pa$. Therefore, from the independent thinning theorem, these detecting clusters form a PPP. Hence, the number of clusters is  Poisson RV with mean
\begin{align}
  N_\Psi&= \Expect\left[\sum_{\x_i} \mathds{1}\left(\cup_{\y_{ij}} \Event_{ij} \right)\right].
\end{align}
Applying Campbell's theorem\cite{andrews2023a},
\begin{align}
  N_\Psi&=\lambda_\Pa\int \Prob\left[\cup_{\y_{j}} \Event_{j}\right] \diff \x,  
\end{align}
where $\Event_{j}=\{ (\x+\y_j+\bps_j(t))\oplus \ball{\origin, a} \cap \{\origin\} \neq \emptyset \}$.
Using \eqref{eqn:gx1} and $\Event_{j}$, we get
\begin{align}
   1-G(\x)= 1-\Prob\left[\cap_{\y_{j}} \Event_{j}^\Comp\right]= \Prob\left[\cup_{\y_{j}} \Event_{j}\right].
\end{align}
Hence, we get the mean as
\begin{align}
    N_\Psi&= \lambda_\Pa V(t),\\
   \text{where,}\quad V(t)&=\int_{\rthree}\left(1-G(\x)\right)\diff \x.\label{Vdot}
\end{align}From \eqref{eqn:gx1}, $G(\x)$ can be rewritten as
\begin{align}
&\stackrel{(a)}=\Expect_{\y,\bps}\left[\prod\nolimits_{\y_{j}\in\mathcal{N}_\origin}\mathds{1}\left( \origin\notin  \x + \y_{j} + \bps_{j}(t)  \oplus \ball{\origin, a} \right)\right]\nonumber\\
&\stackrel{(b)}=\Expect_{\y,\bps}\left[\prod\nolimits_{\y_{j}\in\mathcal{N}_\origin}\mathds{1}\left( -\x \notin  \y_{j} + \bps_{j}(t)  \oplus \ball{\origin, a} \right)\right]\nonumber\\
&\stackrel{(c)}=\Expect_{\y,\bps}\left[\mathds{1}\left( \bigcap\nolimits_{\y_{j}\in\mathcal{N}_\origin}-\x \notin  \y_{j} + \bps_{j}(t)  \oplus \ball{\origin, a} \right)\right],
\end{align}where step \((a)\) is obtained using a simple mathematical manipulation. Step \((b)\) follows from the property that the underlying PP is rotation-invariant, ensuring that the transformation does not affect the distribution of the points. Step $(c)$ is derived from the property that the product of indicator functions can be expressed as a single indicator function applied to the intersection of the corresponding events, \ie  
\[
\prod\nolimits_{j} \mathds{1}(\Event_j) = \mathds{1}\left(\bigcap\nolimits_{j} \Event_j\right).
\]  
This allows rewriting the expectation in terms of a single indicator function. Now, using the identity $\left(\bigcap_iA_i\right)^\Comp=\bigcup_i A_i^\Comp$,
\begin{align}
&G(\x)=1-\Expect_{\mathcal{N}_\origin,\bps}\left[\mathds{1}\left( -\x \in  \bigcup\nolimits_{\y_{j}\in\mathcal{N}_\origin} \y_{j} + \bps_{j}(t)  \oplus \ball{\origin, a} \right)\right]\label{eq:gxn2}.
\end{align}
Substituting \eqref{eq:gxn2} in \eqref{Vdot}, we get $V(t)$
\begin{align}
&=\int_{\rthree}\!\!\Expect_{\mathcal{N}_\origin,\bps}\left[\mathds{1}\left(\! -\x \in \bigcup\nolimits_{\y_{j}\in\mathcal{N}_\origin}\y_{ij} + \bps_{j}(t)  \oplus \ball{\origin, a} \right)\right]\diff \x\nonumber\\    
&=\Expect_{\mathcal{N}_\origin,\bps}\left[\int_{\rthree}\mathds{1}\left( -\x \in  \bigcup\nolimits_{\y_{j}\in\mathcal{N}_\origin}\y_{j} + \bps_{j}(t)  \oplus \ball{\origin, a} \right)\diff \x\right]\nonumber\\
&=\Expect_{\mathcal{N}_\origin,\bps}\left[\bigm|  \bigcup\nolimits_{\y_{j}\in\mathcal{N}_\origin} \y_{j} + \bps_{j}(t)  \oplus \ball{\origin, a}\bigm| \right].
\end{align}

Let us consider $\eta(t)=\bigm|  \bigcup\nolimits_{\y_{j}\in\mathcal{N}_\origin} \y_{j} + \bps_{j}(t)  \oplus \ball{\origin, a}\bigm|$, and let $V(t)$  denote the average volume of the combined region covered by the NMs of an arbitrary cluster. Then,
\begin{align}
	V(t)&=\int_{\rthree}\mathbb{P}[\z \in \eta(t)]\diff \z
	\nonumber\\
	&=\int_{\rthree}1-\mathbb{P}\left[\bigcap_{\y_{j}}\left(\z \notin \y_{j} + \bps_{j}(t)  \oplus \ball{\origin, a}\right)\right]\diff \z\nonumber\\
	&=\int_{\rthree}1-\mathbb{E}\left[\prod_{\y_{j}}\left(\z \notin \y_{j} + \bps_{j}(t)  \oplus \ball{\origin, a}\right)\right]\diff \z
	\end{align}
Applying the PGFL of PPP in the above equation gives,
\begin{align}
	V(t)
	&=\int_{\rthree}1-\exp\left(\mathbb{P}\left[\bps(t)\cap\ball{\z-\y,a}\right]f(\y)\diff \y \right)\diff \z
\end{align}
Now, using \eqref{eq:hp3d} and solving further gives \eqref{eq:V(t)}.

 Using the expression of $V(t)$ we get the average number of NMs distributed spatially as PCP detecting the target at the origin as $\lambda_\Pa\times V(t)$.

\section{}\label{VolineqAp}

From \cite[Eq. 4.30]{albeverio1998}, 
    \begin{align}
\bigm|   \bps(t)  \oplus \ball{\origin, a}\bigm| &\leq \bigm|  \bigcup\nolimits_{\y_{j}\in\mathcal{N}_\origin} \y_{j} + \bps_{j}(t)  \oplus \ball{\origin, a}\bigm|   \leq m\bigm|  \bps(t)  \oplus \ball{\origin, a}\bigm| ,\label{volineqap1}
\end{align}
where $m$ is the number of NMs in the typical cluster, which is Poisson distributed with mean $\mbar$. Let $\mathsf{A}$ be the event that there exists at least one NM in the typical cluster. Using the law of total expectation,
\begin{align}
    &\Expect_{\bps}\left[\bigm|   \bps(t)  \oplus \ball{\origin, a}\bigm| \right]=\Expect_{\bps}\left[\bigm|   \bps(t)  \oplus \ball{\origin, a}\bigm| \mid \mathsf{A}\right]\times\left(1-\exp(-\mbar)\right),\label{volLB}\\
    &\text{and}\nonumber\\
    &\Expect_{\bps}\left[m \bigm|  \bps(t)  \oplus \ball{\origin, a}\bigm| \right]=\Expect\left[{m}/{\mathsf{A}}\right]\times\left(1-\exp(-\mbar)\right)\times\Expect_{\bps}\left[\bigm|   \bps(t)  \oplus \ball{\origin, a}\bigm| \mid \mathsf{A}\right]\nonumber\\
    &=\mbar \Expect_{\bps}\left[\bigm|   \bps(t)  \oplus \ball{\origin, a}\bigm| \mid \mathsf{A}\right].\label{volUB}
\end{align}
Note that, $\Expect_{\bps}\left[\bigm|   \bps(t)  \oplus \ball{\origin, a}\bigm| \mid \mathsf{A}\right]$ denotes the average volume covered by a spherical particle of radius $a$ within time $t$ under Brownian motion \cite{berezhkovskii1989}.
Now, taking the expectation of \eqref{volineqap1} and substituting \eqref{volLB} and \eqref{volUB} into it gives \eqref{volineqex}.
\section{}\label{Ap_statMCP}
The indicator function $\Ind_{\Event_{\origin}}$ for the event $\Event_{\ob}$ is defined as
\begin{align*}
     &=1-\prod\nolimits_{\x_i\in\Phi_\Pa}\prod\nolimits_{\y_{ij}\in\mathcal{N}_i}\mathds{1}\left( \x_i + \y_{ij}\oplus \ball{\origin, a} \cap \origin = \emptyset \right).
 \end{align*}
 The probability of the occurrence of the event $\Event$ is
\begin{align}
&\Prob\left[\Ind_{\Event_\origin}\right] \stackrel{(a)}{=} 1 - \Expect_{\x,\y}\left[\prod\nolimits_{\x_i\in\Phi_\Pa}\prod\nolimits_{\y_{ij}\in\mathcal{N}_i}\right.\left.\vphantom{\frac{}{}}\Ind\left( \x_i + \y_{ij} \oplus \ball{\origin, a} \cap \origin = \emptyset \right)\right]\nonumber\\
&\stackrel{(b)}{=} 1 - \Expect_{\Phi_{\Pa}}\left[\prod\nolimits_{\x_i\in\Phi_\Pa}\prod\nolimits_{\y_{ij}\in\mathcal{N}_i}\Ind(\y_{ij} \notin \ball{-\x_{i},a})\right]\nonumber\\
&\stackrel{(c)}{=} 1 - \Expect_{\Phi_{\Pa}}\left[\prod\nolimits_{\x_i\in\Phi_\Pa}\exp\left(\int_{\y \in \ball{\origin,a}}\frac{\bar{m}}{(4/3)\pi r^{3}}\right.\right.\left.\left.\vphantom{\frac{}{}}\left(1 - \Ind(\y \notin \ball{-\x_{i},a})\right)\diff \y\right)\right]\nonumber\\
&\stackrel{(d)}{=} 1 - \Expect_{\Phi_{\Pa}}\left[\prod\nolimits_{\x_i\in\Phi_\Pa}\exp\left(-\frac{\bar{m}}{(4/3)\pi r^{3}}|\ball{\origin,a}\cap\ball{\x_{i},r}|\right)\right]
\end{align}Here step $(a)$ is due to the definition of the probability of event $\Event_\origin$. Step $(b)$ uses the definition of Minkowski sum  \cite{chiu2013stochastic}. In step $(c)$, we apply the PGFL of PPP within a ball of radius $a$. Step $(d)$ employs the stationarity property of the MCP. Using the PGFL of a three-dimensional PPP and simplifying further completes the proof of the theorem.
\section{}\label{Ap_statTCP}
 The event detection probability is 
    \begin{align}
&\Prob\left[\Ind_{\Event_\origin}\right]\stackrel{(a)}=1-\Expect_{\x,\y}\left[\prod\nolimits_{\x_i\in\Phi_\Pa}\prod\nolimits_{\y_{ij}\in\mathcal{N}_i}\right.\nonumber\\
   &\qquad \left.\mathds{1}\left( \x_i + \y_{ij}\oplus \ball{\origin, a} \cap \origin = \emptyset \right)\right]\nonumber\\
   &\stackrel{(b)}=1-\Expect_{\Phi_{\Pa}}\left[\prod\nolimits_{\x_i\in\Phi_\Pa}\prod\nolimits_{\y_{ij}\in\mathcal{N}_i}\mathds{1}(\y_{ij}\notin \ball{-\x_{i},a})\right]\nonumber\\
   &\stackrel{(c)}=1-\Expect_{\Phi_{\Pa}}\left[\prod\nolimits_{\x_i\in\Phi_\Pa}\right.\left.\quad\exp\left(-\int_{\y \in \ball{-\x_{i},a}}\frac{\bar{m}}{(2\pi\sigma^{2})^{3/2}}\exp\left(-\frac{\|\y\|^{2}}{2\sigma^{2}}\right)\diff \y\right)\right],\nonumber
\end{align}where step $(a)$ follows from the definition of the probability of event $\Event_\origin$, step $(b)$ applies the definition of the Minkowski sum, and finally, step $(c)$ uses the PGFL of the daughter PP. Replacing $\y=\x_{i}+{\bf t}\implies \diff\y= \diff {\bf t}$, we get
\begin{align*}
     \Prob\left[\Ind_{\Event_\origin}\right]  &=1-\Expect_{\Phi_{\Pa}}\left[\prod\nolimits_{\x_i\in\Phi_\Pa}\exp\left(-\int_{{\bf t} \in \ball{\origin,a}}\frac{\bar{m}}{(2\pi\sigma^{2})^{3/2}}\right.\right.\left.\left.\exp\left(-\frac{\|\x_{i}+{\bf t}\|^{2}}{2\sigma^{2}}\right)\diff {\bf t}\right)\right].
\end{align*}Further, applying the PGFL of PPP completes the proof.

\section{}\label{Ap_sin_actual}
The event $\Event_\s$ given in \eqref{event_single} can be written as,
 \begin{align}
     \Event_\s = \left(\bigcap\nolimits_{j \in \mathbb{N}_\s} \left( \bp_{\Pa j}(t)\oplus \ball{\origin, a} \cap \origin = \emptyset \right)\right)^\Comp.\label{events2}
 \end{align}The indicator function for this event can be written as
\begin{align}
     \Ind_{\Event_\s}=1-\prod\nolimits_{\y_j\in\mathcal{N}_i}\mathds{1}\left( \bp_{\Pa j}(t)\oplus \ball{\origin, a} \cap \origin = \emptyset \right).
 \end{align}
The probability of event \(\Event_\s\) can be determined as
\begin{align}
\Prob\left[\Ind_{\Event_\s}\right]&=1-\Expect_{\x,\y,\bp}\left[\prod\nolimits_{\y_j\in\mathcal{N}_\s}\mathds{1}\left( \bp_{\Pa j}(t)\oplus \ball{\origin, a} \cap \origin = \emptyset \right)\right]\nonumber\\
&\stackrel{(a)}{=}1-\Expect_{\x}\left[\exp\left(-\mbar\int_{\rthree}\left(1-\right.\right.\right.\left.\Expect\left[\mathds{1}\left( \left( \x + \y + \bps(t) \right) \oplus \ball{\origin, a} \cap \origin = \emptyset \right)\right]f(\y)\diff \y\bigg)\right]\nonumber ,
\end{align}
where (a) is obtained by applying PGFL of PPP.
Further simplifying the equation results in  
\begin{align}
&\Prob\left[\Ind_{\Event_\s}\right]
=1-\Expect_{\x}\left[\exp\left(-\mbar\int_{\rthree}\Prob\left[\bps(t)\cap \ball{\x+\y,a} \neq\phi\right]\right.\right.\left.\left.\times f(\y)\diff \y\right)\right]\nonumber\\ 
&\stackrel{(a)}{=}1-\Expect_{\x}\left[\exp\left(-\mbar\int_{\rthree}\frac{a}{||\x+\y||}\times\right.\right.\left.\erfc\left(\frac{||\x+\y||-a}{\sqrt{4Dt}}\right)f(\y)\diff \y\bigg)\right],
\end{align}
where (a) is obtained by using \eqref{eq:hp3d}. Now, taking the expectation with respect to $\x$ gives,
\begin{align}
\Prob\left[\Ind_{\Event_\s}\right]
&=1- \int_{\rthree}\exp\left(-\mbar\int_{\rthree}\frac{a}{||\x+\y||}\right. \left.\erfc\left(\frac{||\x+\y||-a}{\sqrt{4Dt}}\right)f(\y)\diff \y\right)g(\x)\diff \x,
\end{align}
where $g(\x)$ is the spatial distribution of the cluster center.

\ifCLASSOPTIONcaptionsoff
  \newpage
\fi
\bibliographystyle{IEEEtran}
\bibliography{ClusteredTD}

\end{document}